\setlist[enumerate]{itemsep=0mm}
\setlist[itemize]{itemsep=0mm,topsep=1pt,partopsep=0pt}
\newif\ifdraft\drafttrue
\newcommand\modedraft[1]{#1}
\newcommand\todo[1]{{\color{purple}[\textbf{To do:} #1]}}
\newcommand\bmcomment[1]{{\footnotesize \color{blue}[#1 - \textbf{Bastien}]}}
\newcommand\amcomment[1]{{\footnotesize
    \color{OliveGreen}[#1 - \textbf{Nello}]}}
\newcommand\rbcomment[1]{{\footnotesize \color{Orange}[#1 - \textbf{Raphael}]}}
\newcommand\modedraft[1]{}
\newcommand\todo[1]{}
\newcommand\bmcomment[1]{}
\newcommand\amcomment[1]{}
\newcommand\rbcomment[1]{}
\title{Decidability results for ATL with imperfect
  information and perfect recall\footnote{This project
    has received funding from the European Union's Horizon 2020
    research and innovation programme under the Marie Sklodowska-Curie
    grant agreement No 709188.}}
\author[1]{Rapha\"el Berthon}
\author[2]{Bastien Maubert}
\author[3]{Aniello Murano}
\affil[1]{\'Ecole Normale Supérieure de Rennes, France\\
  \texttt{raphael.berthon@ens-rennes.fr}}
\affil[2]{Universit\`a degli Studi di Napoli Federico II, Italy\\
  \texttt{bastien.maubert@gmail.com}}
\affil[3]{Universit\`a degli Studi di Napoli Federico II, Italy\\
  \texttt{murano@na.infn.it}}
\authorrunning{R. Berthon, B. Maubert and A. Murano} 
\subjclass{F.3.1 Specifying and Verifying and Reasoning about Programs}
\keywords{Temporal logics, formal verification, imperfect information, 
 strategic reasoning}
\theoremstyle{plain}
\newtheorem{proposition}[theorem]{Proposition}
\newcommand\UElogo{%
\begin{tikzpicture}[remember picture,overlay]
\node[anchor=south,yshift=4.2cm,xshift=2cm] at (current page.south) {\includegraphics[height=2.5em]{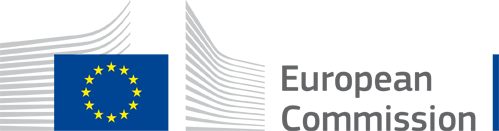}};
\end{tikzpicture}%
}
\begin{document}


\maketitle
\UElogo

\begin{abstract}
Alternating-time Temporal Logic (\ATLs) is a central logic for
multiagent  systems. It has been extended in
various ways, notably with imperfect
information (\ATLsi). Since the model-checking problem
against \ATLsi for agents  with perfect recall is undecidable, studies have mostly
focused either on
agents without memory, or on alternative semantics to
retrieve decidability. In this work,  we establish new, strong
decidability results for agents with perfect recall. We first prove a
meta-theorem that allows the transfer of decidability results for classes
of multiplayer games with imperfect information, such as games with
hierarchical observation, to the model-checking
problem for \ATLsi. We also establish that
model checking  \ATLs with strategy context and imperfect
information for \emph{hierarchical instances} is decidable.
\end{abstract}






\section{Introduction}
\label{sec-intro}

In formal system  verification, \emph{model checking} is a
well-established method to automatically check the correctness of a system~\cite{CE81,QS81,CGP02}.  It consists in  modelling the
system as a  mathematical
structure, expressing its
desired behaviour as a formula from some suitable logic, and  checking whether the model
satisfies the formula. 
In the nineties, interest has arisen in the verification of
\emph{multiagent systems} (MAS),
 in which various entities (the \emph{agents})
 interact and can form coalitions to attain certain objectives. This
 led to the development of logics that allow reasoning about  strategic abilities
 in MAS~\cite{AHK02,Pau02,JH04,HWW07,AGJ07,BJ14}.  

 \emph{Alternating-time
  Temporal Logic} (\ATLs), introduced by Alur, Henzinger,
and Kupferman~\cite{AHK02}, plays a central role  in this line of work. This logic, interpreted on \emph{concurrent
  game structures},   extends \CTLs with \emph{strategic
  modalities}. These modalities
   allow one to reason about the existence of strategies
for coalitions of agents to force the system's behaviour to satisfy certain
temporal properties. \ATLs has been extended in many ways,
and among these extensions an important one is \emph{\ATLs with
  strategy context}~\cite{DBLP:conf/lfcs/BrihayeLLM09,DBLP:journals/iandc/LaroussinieM15}. 
In \ATLs,  strategies of all agents are  forgotten at each new strategic
modality. In \ATLs with strategy context (\ATLssc) instead they are
stored in a  \emph{strategy context}, and are forgotten only
when replaced by a new strategy or when the formula explicitly
unbinds the agent from her strategy. Thanks to this additional
expressive power, \ATLssc can express important game theoretic concepts
such as the existence of Nash Equilibria~\cite{DBLP:journals/iandc/LaroussinieM15}. 

In many real-life scenarios, such as poker, 
agents do not always know precisely what is the current state of the
system. Instead, they have a partial view, or observation, of the
state. This fundamental feature of MAS is called \emph{imperfect
  information}, and it is known to quickly bring about undecidability
when involved in strategic problems, especially when agents have
\emph{perfect recall} of the past, which is a usual and important
assumption in games with imperfect information and epistemic temporal
logics~\cite{fagin1995reasoning}. For instance solving multiplayer
games with imperfect information and perfect recall, \ie, deciding the
existence of a distributed winning strategy in such games, is already
undecidable for reachability objective, as proven by Peterson, Reif
and Azhar~\cite{peterson2001lower}.  Since such games are easily
captured by \ATLs with imperfect information (\ATLsi), model checking
\ATLsi with perfect recall is also undecidable~\cite{AHK02}.

However it is known that
restricting attention to cases where
some sort of hierarchy exists on the 
different agents' information yields decidability for several problems related to
the existence of strategies.
Synthesis of distributed systems, which implicitly uses perfect
recall and is
  undecidable in
general~\cite{DBLP:conf/focs/PnueliR90}, is decidable for hierarchical
architectures~\cite{DBLP:conf/lics/KupfermanV01}. Actually, for
branching-time specifications, distributed synthesis is
decidable exactly on architectures free from \emph{information forks},
for which the problem can be reduced to the hierarchical
case~\cite{DBLP:conf/lics/FinkbeinerS05}. For richer
specifications from alternating-time
logics, being free of information forks is no longer sufficient, but  distributed synthesis is decidable precisely  on
hierarchical architectures~\cite{DBLP:conf/atva/ScheweF07}. 
Similarly, solving
 multiplayer games with imperfect information and  perfect recall,
 \ie, checking for the existence of winning distributed strategies,
 is decidable for $\omega$-regular winning conditions when  there is a hierarchy among players, each one observing more
than those below~\cite{peterson2002decision,DBLP:conf/lics/KupfermanV01}.
Recently, it has been proven that this    assumption can  be relaxed
while maintaining decidability:
  the
problem remains decidable if the hierarchy can change along a play, or even
if transient phases without such a hierarchy are
allowed~\cite{DBLP:conf/atva/BerwangerMB15}. 




{\textbf{Our contribution.}} In this work we establish several
decidability results for
model checking    \ATLsi with perfect recall, with and
without strategy context, all related to notions of hierarchy. Our
first result is a  theorem that
allows the transfer of decidability results for classes of multiplayer
games with imperfect information,
such as those mentioned above, to
the model-checking problem for \ATLsi. This theorem essentially states
that if solving multiplayer games with imperfect information, perfect
recall and omega-regular objectives is decidable on some class of
concurrent game structures, then 
model checking \ATLsi with perfect recall is also decidable on this
class of models (a simple bottom-up algorithm that evaluates
innermost strategic modalities in every state of the model
suffices). As a direct consequence we easily obtain new decidability results
for the model checking of \ATLsi on several classes of concurrent game structures.

Our second contribution considers \ATLs with imperfect information and strategy context
(\ATLssci).
Because there are in general infinitely many possible strategy
contexts, the bottom-up approach used for \ATLsi cannot be used
here. Instead we build upon the proof presented
in~\cite{DBLP:journals/iandc/LaroussinieM15} to establish the
decidability of model checking \ATLssc, by reduction to the
model-checking problem for Quantified \CTLs (\QCTLs). The latter 
extends \CTLs with second-order quantification on atomic
 propositions, and it has been studied in a number of
 works~\cite{Sis83,Kup95,KMTV00,french2001decidability,DBLP:journals/corr/LaroussinieM14}.
\QCTLsi, an imperfect-information extension of \QCTLs, has  recently been
 introduced, and its model-checking problem was proven decidable for the
 class of \emph{hierarchical formulas}~\cite{BMMRV17}.
In this paper we define a notion of  \emph{hierarchical instances} for
the \ATLssci model-checking problem: informally, an \ATLssci formula $\phi$
together with a concurrent game structure $\CGS$ is a hierarchical instance if
outermost strategic modalities in $\phi$ concern agents who
observe less in $\CGS$. We adapt the proof
from~\cite{DBLP:journals/iandc/LaroussinieM15} and reduce the
model-checking problem for \ATLssci on hierarchical instances to the
model-checking problem for hierarchical \QCTLsi formulas. We obtain that model checking
hierarchical instances of
\ATLssci with perfect recall is decidable. 
 
{\textbf{Related work.}} The model-checking problem for
\ATLsi is known to be decidable
when agents have no memory~\cite{Sch04}, and the case of agents with bounded memory
reduces to that of no memory. Another way to retrieve decidability is
to assume that all agents in a coalition have the same information,
either because their observations of the system are the same, or
because they can communicate and share their
observations~\cite{DBLP:conf/dalt/GuelevD08,DBLP:journals/corr/abs-1006-1414,DBLP:journals/jancl/GuelevDE11,DBLP:conf/prima/KazmierczakAJ14}. This
idea was also used recently to establish a decidability result for
\ATLssci~\cite{DBLP:journals/corr/LaroussinieMS15} when
all agents have the same observation of the game.

The results we establish here thus strictly extend previously known
results on the decidability of model checking \ATLsi and \ATLssci with
perfect recall and standard semantics, and
they hold for vast, natural classes of instances, that all rely on
notions of hierarchy, which seems to be inherent to all decidable
cases of strategic problems for multiple entities with imperfect
information and perfect recall.

{\textbf{Outline.}}  After setting some basic definitions in
Section~\ref{sec-prelim}, we present  our transfer theorem and its
various corollaries concerning   the model
checking problem for \ATLsi in Section~\ref{sec-ATL}. In Section~\ref{sec-ATLsc} we 
 prove that when restricted to hierarchical instances, model
checking \ATLssci is decidable, and we conclude in Section~\ref{sec-conclusion}. 


\section{Preleminaries}
\label{sec-prelim}

Let $\Sigma$ be an alphabet. A \emph{finite} (resp. \emph{infinite}) \emph{word} over $\Sigma$ is an element
of $\Sigma^{*}$ (resp. $\Sigma^{\omega}$). The empty word is
 noted $\epsilon$, and
$\Sigma^{+}=\Sigma^{*}\setminus\{\epsilon\}$. The \emph{length} of a
  word is $|w|\egdef 0$ if $w$ is the empty word $\epsilon$, if $w=w_{0}w_{1}\ldots
w_{n}$ is a finite nonempty word then $|w|\egdef n+1$, and for an infinite word $w$ we let $|w|\egdef
\omega$. Given a word $w$ and $0\leq i,j\leq |w|-1$, we let $w_{i}$ be the
letter at position $i$ in $w$ and $w[i,j]$ be the subword of $w$ that starts
at position $i$ and ends at position $j$.
For $n\in\setn$ we let $[n]\egdef\{1,\ldots,n\}$. 
Finally,  let us fix a countably
infinite set of \emph{atomic propositions} $\AP$
and let $\APf\subset\AP$ be some finite subset of atomic propositions.

\subsection{Kripke structures}
\label{sec-kripke}

  A \emph{Kripke structure} over $\APf$ is a tuple
  $\KS=(\setstates,\relation,\lab)$ where
  $\setstates$
  is a set
  of \emph{states},
  $\relation\subseteq\setstates\times\setstates$ is a
  left-total\footnote{\ie, for all $\state\in\setstates$, there exists
    $\state'$ such that $(\state,\state')\in\relation$.}
  \emph{transition relation} and $\lab:\setstates\to 2^{\APf}$ is a
  \emph{\labeling function}. 

A \emph{pointed Kripke structure} is a
  pair $(\KS,\state)$ where $\state\in\KS$. 
A \emph{path} in a structure   $\KS=(\setstates,\relation,\lab)$  is
an infinite word $\spath$ over $\setstates$
such that for all $i\in\setn$,
$(\spath_{i},\spath_{i+1})\in \relation$. For 
$\state\in\setstates$,  $\Paths(\state)$ is the set of all
paths that start in $\state$.


\subsection{Infinite trees}
\label{sec-trees}
Let $\Dirtree$ be a finite set. An \emph{$\Dirtree$-tree} $\tree$ 
 is a
 nonempty set of words $\tree\subseteq \Dirtree^+$ such that
\begin{itemize}
  \item\label{p-root} there exists $\racine\in\Dirtree$,  called the
    \emph{root} of $\tree$, such that each
    $\noeud\in\tree$ starts with $\racine$; 
  \item if $\noeud\cdot\dir\in\tree$ and $\noeud\neq\epsilon$, then
    $\noeud\in\tree$, and
  \item if $\noeud\in\tree$ then there exists $\dir\in\Dirtree$ such that $\noeud\cdot\dir\in\tree$.
\end{itemize}

The elements of a tree $\tree$ are called \emph{nodes}.  
  If 
 $\noeud\cdot\dir \in \tree$, we say that $\noeud\cdot\dir$ is a \emph{child} of
 $\noeud$. 
Similarly to Kripke structures, a \emph{\tpath} is an infinite sequence of nodes $\tpath=\noeud_0\noeud_1\ldots$
such that for all $i$, $\noeud_{i+1}$ is a child of
$\noeud_i$,
and $\tPaths(\noeud)$ is the set of \tpaths
 that start in node $\noeud$. 
An \emph{$\APf$-\labeled $\Dirtree$-tree}, or
\emph{$(\APf,\Dirtree)$-tree} for short, is a pair
$\ltree=(\tree,\lab)$, where $\tree$ is an $\Dirtree$-tree called the
\emph{domain} of $\ltree$ and
$\lab:\tree \rightarrow 2^{\APf}$ is a \emph{\labeling}.

\begin{definition}[Tree unfoldings]
  \label{sec-unfoldings}
  Let $\CKS=(\setstates,\relation,\lab)$ be a Kripke structure over $\APf$, and let $\state\in\setstates$. 
  The \emph{tree-unfolding of $\CKS$ from $\state$} is the 
  $(\APf,\setstates)$-tree $\unfold{\state}=(\tree,\lab')$, where
    $\tree$ is the set
    of all finite  paths that start in $\state$, and
    for every $\noeud\in\tree$,
    $\lab'(\noeud)=\lab(\last(\noeud))$.
\end{definition}


\section{\ATLstitle with imperfect information}
\label{sec-ATL}

In this section we recall the syntax and semantics of \ATLs with
imperfect information and synchronous perfect-recall semantics, or \ATLsi for short, and establish a
meta-theorem on the decidability of its model-checking problem.

\subsection{Definitions}
\label{sec-def}

We first introduce the models of the logics we study.  For the rest of the
paper, let us fix  a non-empty finite
set of \emph{agents} $\Ag$ and a non-empty finite set of
\emph{moves} $\Mov$. 

\begin{definition}
  \label{def-CGS}
  A \emph{concurrent game structure with imperfect information} (or
  \CGS for short) over
  $\APf$  is a
  tuple $\CGS=(\setpos,\trans,\val,\{\obseq\}_{\ag\in\Ag})$ where
  $\setpos$ is a non-empty finite set of \emph{positions},
  $\trans:\setpos\times \Mov^{\Ag}\to \setpos$ is a
  \emph{transition function}, $\val:\setpos\to 2^{\APf}$ is a \emph{labelling function}
  and for each agent $\ag\in\Ag$,
  $\obseq\;\subseteq \setpos\times\setpos$ is an equivalence
  relation. 
\end{definition}


In a position $\pos\in\setpos$, each agent $\ag$ chooses a move $\mova\in\Mov$, 
and the game proceeds to position
$\trans(\pos, \jmov)$, where $\jmov\in \Mov^{\Ag}$ stands for the \emph{joint move}
$(\mova)_{\ag\in\Ag}$ (note that we assume 
$\trans(\pos,\jmov)$ to be defined for all $\pos$ and
$\jmov$\footnote{\label{footnote-models}This assumption, as well as the choice of a unique
  set of moves for all agents, is made to ease
  presentation. All the results presented here also hold when the set
  of available moves depends on the agent and the position.}).
For each position $\pos\in\setpos$, $\val(\pos)$ is the finite set of atomic
propositions that hold in $\pos$, and for $\ag\in\Ag$, equivalence
relation  $\obseq$
represents the observation of agent $\ag$: for two positions
$\pos,\pos'\in\setpos$, $\pos\obseq \pos'$ means that agent~$\ag$ cannot tell the
difference between $\pos$ and $\pos'$. We may  write $\pos\in\CGS$ for $\pos\in\setpos$.
A \emph{pointed \CGS} $(\CGS,\pos)$ is a \CGS $\CGS$
together with a position $\pos\in\CGS$.

In Section~\ref{sec-modelcheck-ATL} we also use
\emph{nondeterministic \CGS}, which are as in Definition~\ref{def-CGS}
except that they have a \emph{transition relation}
$\trans\subseteq\setpos\times\Mov^{\Ag}\times\setpos$ instead of a
transition function. In a position $\pos$,
after every agent  has chosen a move, forming a joint move
$\jmov\in\Mov^{\Ag}$, a special
agent called Nature (not in $\Ag$) chooses a next position
$\pos'$ such that $(\pos,\jmov,\pos')\in\trans$
(see~\cite{DBLP:conf/atva/BerwangerMB15} for detail). In the
following, unless explicitly
specified, \CGS always  refers to deterministic \CGS.
The following definitions also concern deterministic \CGS, but they
can be adapted to nondeterministic ones in an obvious way.

A \emph{finite} (resp. \emph{infinite}) \emph{play} is a finite (resp. infinite)
word $\fplay=\pos_{0}\ldots \pos_{n}$ (resp. $\iplay=\pos_{0} \pos_{1}\ldots$)
such that for all $i$ with $0\leq i<|\fplay|-1$ (resp. $i\geq 0$), there exists a joint move $\jmov$
such that $\trans(\pos_{i}, \jmov)=\pos_{i+1}$. A finite (resp. infinite)
play $\fplay$ (resp. $\iplay$) \emph{starts} in
a position $\pos$ if $\fplay_{0}=\pos$ (resp. $\iplay_{0}=\pos$).
We let $\FPlay(\CGS,\pos)$ be the set of plays, either finite or infinite,  that start
in $\pos$.

In this work we consider
agents with synchronous perfect recall, meaning that the observational
equivalence relation for each agent~$\ag$ is extended to finite plays
the following way: $\fplay \sim_{\ag} \fplay'$ if $|\fplay|=|\fplay|'$
and $\fplay_{i}\sim_{\ag}\fplay'_{i}$ for every $i\in\{0,\ldots, |\fplay|-1\}$. A
\emph{strategy for agent~\ag} is a function
$\strat:\setpos^{+}\to\Mov$ such that $\strat(\fplay)=\strat(\fplay')$ whenever $\fplay \sim_{\ag}
\fplay'$. The latter  constraint captures the essence of
imperfect information, which is that agents can base their strategic
choices only on the information available to them, and removing this constraint yields the semantics of classic
\ATL with perfect information.

A \emph{strategy profile} for a coalition $\coal\subseteq
\Ag$ is a mapping $\strat_{\coal}$ that assigns a strategy to each agent $\ag\in
\coal$; for $\ag\in\coal$, we may write
$\strat_{a}$ instead of $\strat_{\coal}(a)$. An infinite play $\iplay$ \emph{follows} a
strategy profile $\strat_{\coal}$ for a coalition $\coal$ if for all $i\geq
0$, there exists a joint move $\jmov$ such that
$\trans(\iplay_{i},\jmov)=\iplay_{i+1}$ and for each $a\in \coal$,
$\mov_{a}=\strat_{a}(\iplay[0,i])$. For a strategy profile
$\strat_{\coal}$ and a position $\pos\in\setpos$, we define the
outcome $\out(\pos,\strat_{\coal})$ of
$\strat_{\coal}$ in $\pos$ as the set of
infinite plays that start in $\pos$ and follow $\strat_{\coal}$.

The syntax of \ATLsi is the same as that of \ATLs, and is given by
the following grammar:
\[  \phi ::= p \mid \neg \phi \mid \phi \ou \phi \mid \Estrat \phi
  \mid \X\phi \mid \phi\until\phi,\]
  where $p\in\AP$ and $\coal\subseteq \Ag$.

$\X$ and $\until$ are the classic \emph{next}
and \emph{until} operators, respectively, while the \emph{strategic} operator
$\Estrat$ quantifies on strategy profiles for coalition $\coal$.
  
The semantics of \ATLsi  is defined with regards to a \CGS $\CGS=(\setpos,\trans,\val,\{\obseq\}_{\ag\in\Ag})$,
an infinite play $\iplay$ and a position $i\geq 0$ along this play, by induction on formulas:
\[\begin{array}{ll}
  \CGS,\iplay,i\models  p & \mbox{if }p\in\val(\iplay_{i})  \\
  \CGS,\iplay,i\models  \neg \phi & \mbox{if } \CGS,\iplay,i\not\models \phi  \\
  \CGS,\iplay,i\models  \phi\ou\phi' & \mbox{if } \CGS,\iplay,i\models \phi \mbox{ or
  }\CGS,\iplay,i\models \phi' \\
  \CGS,\iplay,i\models  \Estrat\phi & \mbox{if there exists a strategy
    profile $\strat_{\coal}$ s.t.}\\ & \mbox{for all
  $\iplay'\in\out(\iplay_{i},\strat_{\coal})$,
  $\CGS,\iplay',0\models\phi$}\\
\CGS,\iplay,i \models \X\phi & \mbox{if }\CGS,\iplay,i+1\models\phi \\
\CGS,\iplay,i \models \phi\until\phi' & \mbox{if there exists $j\geq i$
  s.t. }\CGS,\iplay,j\models\phi' \mbox{ and,}\\
& \mbox{for all $k$ s.t. $i\leq k
  <j$, } \CGS,\iplay,k\models\phi.
\end{array}\]

An \ATLsi formula $\phi$ is \emph{closed} if every temporal
operator ($\X$ or $\until$) in $\phi$ is in the scope of a strategic
operator $\Estrat$. Since the semantics of a closed formula $\phi$
does not depend on the future, we may write $\CGS,\pos\models \phi$
if $\CGS,\iplay,0\models\phi$ for any infinite play $\iplay$
that starts in $\pos$.

The \emph{model-checking problem for \ATLsi}
consists in deciding, given a closed \ATLsi formula $\phi$ and a
finite pointed \CGS $(\CGS,\pos)$, whether $\CGS,\pos\models\phi$.

\subsection{Model checking \ATLsititle}
\label{sec-modelcheck-ATL}

It is well known that the model-checking problem for \ATLsi is
undecidable for agents with perfect recall~\cite{AHK02}, as it can easily express the
  existence of distributed winning strategies
 for multiplayer reachability games with imperfect information and
 perfect recall, which was proved
 undecidable by Peterson, Reif and Azhar~\cite{peterson2001lower}. A
 direct proof of this undecidability result for \ATLsi is also presented
 in~\cite{DBLP:journals/corr/abs-1102-4225}. However, there are
 classes of multiplayer games with imperfect information that are
 decidable. For many years, the only known decidable case was that of
 hierarchical games, in which there is a total preorder among players,
 each player observing at least as much as those below her in this
 preorder \cite{peterson2002decision,DBLP:conf/lics/KupfermanV01}. Recently, this result has been extended by relaxing the
 assumption of hierarchical observation. In particular, it has been
 shown that the problem remains decidable if the hierarchy
 can change along a play, or if transient phases without such a hierarchy
 are allowed~\cite{DBLP:conf/atva/BerwangerMB15}. We
 establish that these results transfer to the model-checking problem
 for \ATLsi.

We remind that a concurrent game with imperfect information is a pair 
$((\CGS,\pos),W)$ where $(\CGS,\pos)$ is a pointed \emph{nondeterministic} \CGS and $W$ is 
 a property of infinite plays called the \emph{winning condition}. 
 The
\emph{strategy problem} is, given such a game, to decide whether there exists a strategy profile
for the grand coalition $\Ag$  to enforce the
winning condition against Nature (for more details  see, \eg, \cite{DBLP:conf/atva/BerwangerMB15}). 

Before stating our transfer theorem we need to introduce a couple of
additional notions. First we introduce a notion of abstraction over 
a group of agents. Informally, abstracting a \CGS $\CGS$ over an agent
consists in erasing her from
the group of agents  and letting Nature play for her in $\CGS$.

\begin{definition}[Abstraction]
  \label{def-abstraction}
Let $\coal\subseteq\Ag$ be a group of agents and
let   $\CGS=(\setpos,\trans,\lab,\{\obseq\}_{\ag\in\Ag})$ be a
  \CGS. The
  \emph{abstraction of $\CGS$ from $\coal$} is the nondeterministic
  \CGS over set
  of agents $\Ag\setminus\coal$ defined as $\abs{\coal}\egdef(\setpos,\trans',\lab,\{\obseq\}_{\ag\in\Ag\setminus\coal})$, where
  for every $\pos\in\setpos$ and $\jmov\in\Mov^{\Ag\setminus\coal}$,
  \[(\pos,\jmov,\pos')\in\trans'\mbox{ if }\;\exists
  \jmov'\in\Mov^{\coal} \sthat \trans(\pos,(\jmov,\jmov'))=\pos'.\]
\end{definition}

Thanks to this notion we can define the following problem: 

\begin{definition}[$\coal$-strategy problem]
  \label{def-A-strategy}
The $\coal$-strategy problem takes as input a pointed  \CGS
$(\CGS,\pos)$, a set $\coal\subseteq\Ag$ of agents and a winning
condition $W$, and returns the answer to the strategy problem for the game $((\abs{\Ag\setminus\coal},\pos),W)$.
\end{definition}
 The $\coal$-strategy problem for $(\CGS,\pos)$ with
winning condition $W$ thus consists in deciding whether there is a
 strategy profile for agents in $\coal$ to enforce $W$ against
 everybody else.

 Finally we introduce the following notion, which simply captures the
 change of initial position in a game from a position $\pos$ to another
 position $\pos'$ reachable from $\pos$:

 \begin{definition}[Initial shifting]
   \label{def-shifting}
   Let $\CGS$ be a  \CGS and let $\pos,\pos'\in\CGS$. The pointed \CGS
   $(\CGS,\pos')$ is an 
 \emph{initial
   shifting} of $(\CGS,\pos)$
if $\pos'$ is reachable from $\pos$ in $\CGS$.
 \end{definition}

 We are now ready to state our first result.
\begin{theorem}
  \label{theo-meta}
  If $\class$ is a class of pointed \CGSs closed under initial
  shifting and such that
 the $\coal$-strategy 
  problem with $\omega$-regular objective is decidable on $\class$, then model
  checking \ATLsi is decidable on $\class$.
\end{theorem}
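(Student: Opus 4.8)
The plan is to give a bottom-up model-checking algorithm that, given a closed $\ATLsi$ formula $\phi$ and a pointed $\CGS$ $(\CGS,\pos)\in\class$, processes the strategic subformulas of $\phi$ from the innermost ones outward, at each stage replacing a strategic subformula $\Estrat\psi$ by a fresh atomic proposition whose truth value at each position is computed by an appeal to the $\coal$-strategy problem. First I would fix an innermost strategic subformula $\psi=\Estrat\chi$ of $\phi$; since it is innermost, $\chi$ contains no strategic operator, so $\chi$ is an $\LTL$ formula over the atomic propositions currently labelling $\CGS$ (including the fresh ones introduced at earlier stages). For each position $\posbis\in\setpos$, I would consider the initial shifting $(\CGS,\posbis)$, which lies in $\class$ by closure under initial shifting, and observe that $\CGS,\posbis\models\Estrat\chi$ holds exactly when there is a strategy profile for the agents in $\coal$ whose every outcome from $\posbis$ satisfies $\chi$. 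This is literally an instance of the $\coal$-strategy problem on $(\CGS,\posbis)$ with winning condition $W_\chi=\{\iplay\mid\CGS,\iplay,0\models\chi\}$: a strategy profile for $\coal$ in $\CGS$ enforces $W_\chi$ against everybody else precisely when, viewing the other agents and the resolution of their moves as Nature in the abstraction $\abs{\Ag\setminus\coal}$, that profile is winning. Since $W_\chi$ is $\omega$-regular (being definable in $\LTL$, hence recognisable by a deterministic parity or Rabin automaton), decidability of the $\coal$-strategy problem on $\class$ lets me compute, for every $\posbis$, whether $\CGS,\posbis\models\Estrat\chi$.

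Next I would record these answers by extending the labelling: introduce a fresh atomic proposition $p_\psi\notin\APf$ and set $p_\psi\in\val(\posbis)$ iff $\CGS,\posbis\models\Estrat\chi$, obtaining a new $\CGS$ $\CGS'$ over $\APf\cup\{p_\psi\}$ with the same positions, transitions and observation relations, so that $\CGS'$ is still in $\class$ (closure under initial shifting is about the pointed structure; the underlying game is unchanged, only the labelling grows). A routine induction on formula structure shows that for the formula $\phi'$ obtained from $\phi$ by replacing the subformula occurrence $\psi$ with $p_\psi$, and for every position $\posbis$, we have $\CGS,\posbis\models\phi\iff\CGS',\posbis\models\phi'$ — the key point being that $\psi$ is closed, so its truth at a position is faithfully captured by $p_\psi$, and no other strategic operator in $\phi$ can "see inside'' $\psi$. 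Iterating this step strictly decreases the number of strategic operators, so after finitely many rounds I reach a $\CGS$ over an enlarged finite set of atomic propositions and a formula with no strategic operator; since $\phi$ was closed, the final formula is a Boolean combination of the introduced fresh propositions, and its value at $\pos$ is read off directly from the final labelling. This decides $\CGS,\pos\models\phi$.

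The main obstacle, and the only place that needs care, is the reduction of each step to the $\coal$-strategy problem: one must check that the $\ATLsi$ semantics of $\Estrat\chi$ — a strategy profile for $\coal$ with every outcome satisfying $\chi$, where strategies are $\obseq$-consistent for perfect recall — coincides exactly with the strategy problem for the grand coalition in $\abs{\Ag\setminus\coal}$ against Nature. In particular the imperfect-information constraint on the agents of $\coal$ is retained in the abstraction (their observation relations $\obseq$ for $\ag\in\coal$ are kept), while the agents outside $\coal$ are folded into Nature, whose nondeterministic choices in $\trans'$ range over exactly the outcomes those agents could force; and the synchronous perfect-recall semantics matches the perfect-recall distributed strategies used in the strategy problem. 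Once this correspondence is spelled out, together with the observation that $\LTL$-definable winning conditions are $\omega$-regular and that the class $\class$ is preserved by both initial shifting and the harmless addition of fresh propositions, the theorem follows. The algorithm is the "simple bottom-up algorithm that evaluates innermost strategic modalities in every state'' promised in the introduction.
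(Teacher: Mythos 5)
Your proposal is correct and follows essentially the same route as the paper: a bottom-up pass that evaluates each innermost $\Estrat\chi$ (with $\chi$ an \LTL, hence $\omega$-regular, objective) as an instance of the $\coal$-strategy problem at each position, using closure under initial shifting to stay in $\class$, and then relabelling with fresh atomic propositions and iterating. The only nitpick is that you quantify over all positions of $\setpos$, whereas closure under initial shifting only guarantees $(\CGS,\pos')\in\class$ for $\pos'$ reachable from $\pos$ — restrict the evaluation to reachable positions (as the paper does) and everything goes through.
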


\begin{proof}
  Let $\class$ be such a class of pointed \CGSs, and let
  $(\phi,(\CGS,\pos))$ be an instance of the model-checking problem
  for \ATLsi on $\class$. A bottom-up algorithm consists in evaluating
  each innermost subformula of $\phi$ of the form $\Estrat \phi'$,
  where $\phi'$ is thus an \LTL formula, on each position $\pos'$ of
  $\CGS$ reachable from $\pos$. Evaluating $\Estrat \phi'$ on $\pos'$
  amounts to solving an instance of the $\coal$-strategy
  problem\footnote{Observe that if $\coal=\Ag$ then
    $\abs{\Ag\setminus\coal}=\CGS$, and Nature thus does not do
    anything. This is coherent with the fact that for agents with
    perfect recall
    $\Estrat[\Ag]\phi\equiv \E\phi$, where $\E$ is the \CTL path
    quantifier, even for imperfect information.}
  with $\omega$-regular objective (recall that \LTL properties are
  $\omega$-regular).  By assumption $(\CGS,\pos)\in\class$, and
  because $\class$ is closed by initial shifting and $\pos'$ is
  reachable from $\pos$, we have that $(\CGS,\pos')\in\class$.  Also
  by assumption, the $\coal$-strategy problem for $\omega$-regular
  winning conditions is decidable on $\class$. We thus have an
  algorithm to evaluate each $\Estrat \phi'$ on each $\pos'$.  One can
  then mark positions of the game with fresh atomic propositions
  indicating where these formulas hold, and repeat the procedure until
  all strategic operators have been eliminated. It then remains to
  evaluate a boolean formula in the initial position $\pos$.
\end{proof}

Let us recall for which classes of nondeterministic \CGSs the strategy
problem is known to be decidable. A (nondeterministic or
deterministic) \CGS $\CGS$ 
 has
\emph{hierarchical observation} if there exists a total preorder $\pref$
over $\Ag$ such that if $\ag\pref\agb$ and $\pos\obseq\pos'$, then $\pos\obseq[\agb]\pos'$.
This notion was refined in \cite{DBLP:conf/atva/BerwangerMB15} to take
into account the agents' memory, using the notion of \emph{information
  set}: for a finite play $\fplay\in\FPlay(\CGS,\pos)$ and an agent $\ag$, the
\emph{information set} of agent $\ag$ after $\fplay$ is
$\infset(\fplay)\egdef\{\fplay'\in\FPlay(\CGS,\pos)\mid\fplay \obseq\fplay'\}$.
A finite play $\fplay$  yields \emph{hierarchical
  information} if there is a total preorder $\pref$ over $\Ag$ such that
if $\ag\pref\agb$, then $\infset(\fplay)\subseteq\infset[\agb](\fplay)$.
If all finite plays in $\FPlay(\CGS,\pos)$ yield hierarchical
information for the same preorder over agents, $(\CGS,\pos)$ yields
\emph{static hierarchical information}. If this preorder can vary
depending on the play, $(\CGS,\pos)$ yields \emph{dynamic hierarchical
information}. The last generalisation consists in allowing for
transient phases without hierarchy: if every infinite
play in $\FPlay(\CGS,\pos)$ has infinitely many prefixes that yield
hierarchical information,  $(\CGS,\pos)$ yields \emph{recurring
  hierarchical information}.

\begin{proposition}
  \label{prop-closed-abstraction}
Hierarchical observation as well as  static, dynamic and recurring
hierarchical information are preserved by abstraction.
\end{proposition}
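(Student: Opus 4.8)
The plan is to show that each of the four hierarchy notions transfers from a \CGS $\CGS$ to its abstraction $\abs{\coal}$ for any group $\coal\subseteq\Ag$. The key observation is that abstraction changes neither the set of positions nor the observation relations of the remaining agents: only the transition structure is modified (and it is modified only by ``projecting out'' the moves of $\coal$, while keeping exactly the same set of reachable successors). So for hierarchical observation, which is a purely static property of the relations $\{\obseq\}_{\ag}$, the statement is immediate: if $\pref$ is a total preorder witnessing hierarchical observation for $\CGS$ over $\Ag$, then its restriction to $\Ag\setminus\coal$ witnesses hierarchical observation for $\abs{\coal}$, since the condition ``$\ag\pref\agb$ and $\pos\obseq\pos'$ imply $\pos\obseq[\agb]\pos'$'' is inherited verbatim.

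For the three information-set based notions the argument is slightly more delicate because information sets are defined relative to the set $\FPlay$ of finite plays, which does change under abstraction. First I would establish the following bridging fact: the finite plays of $\abs{\coal}$ from $\pos$ are exactly the finite plays of $\CGS$ from $\pos$. This holds because $(\pos,\jmov,\pos')\in\trans'$ iff there is some completion $\jmov'\in\Mov^{\coal}$ with $\trans(\pos,(\jmov,\jmov'))=\pos'$; hence a sequence $\pos_0\ldots\pos_n$ is a finite play of $\abs{\coal}$ iff each consecutive pair admits \emph{some} joint move in $\CGS$ realizing the step, which is precisely the definition of a finite play of $\CGS$. Consequently, for every agent $\ag\in\Ag\setminus\coal$ and every finite play $\fplay$, the information set $\infset(\fplay)$ computed in $\abs{\coal}$ coincides with the one computed in $\CGS$, because the relation $\sim_\ag$ on finite plays depends only on the common length and the positions visited, and the underlying set $\FPlay$ is the same.

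Given this, the transfer of static, dynamic and recurring hierarchical information is routine. If $\fplay$ yields hierarchical information in $\CGS$ via a preorder $\pref$ over $\Ag$, then restricting $\pref$ to $\Ag\setminus\coal$ yields a preorder for which $\ag\pref\agb$ implies $\infset(\fplay)\subseteq\infset[\agb](\fplay)$ \emph{in $\abs{\coal}$} as well, using the coincidence of information sets; so every finite play that yields hierarchical information in $\CGS$ also does so in $\abs{\coal}$, and with the ``same'' preorder up to restriction. For static hierarchical information one uses the same fixed restricted preorder for all plays; for dynamic, one restricts each play's preorder; for recurring, one observes that $\abs{\coal}$ and $\CGS$ have the same infinite plays from $\pos$ (by the same successor-set argument applied to all prefixes), and a prefix yields hierarchical information in one iff it does in the other, so the ``infinitely many good prefixes'' condition is preserved.

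**Main obstacle.** There is no serious mathematical obstacle here; the result is essentially a bookkeeping lemma. The one point that must be handled with a little care, rather than merely asserted, is the claim that the set of finite (and infinite) plays is invariant under abstraction --- i.e.\ that erasing agents in $\coal$ and handing their choices to Nature enlarges neither the transition relation's set of realizable successors from a given position (it clearly does not shrink it). This follows directly from Definition~\ref{def-abstraction}, but it is the hinge of the whole argument, and once it is in place every other case is a one-line restriction-of-preorder remark.
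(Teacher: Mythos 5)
Your proposal is correct and follows essentially the same route as the paper, whose own proof is a one-line remark that abstraction removes agents without affecting the observations of the remaining ones, so the result follows from the definitions. Your write-up merely makes explicit the supporting fact the paper leaves implicit — that the sets of finite (and infinite) plays, and hence the information sets of the remaining agents, are unchanged by abstraction — which is a faithful and correct elaboration rather than a different argument.
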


\begin{proof}
  Abstraction removes agents without affecting observations of
  remaining ones. The result thus follows from the
  respective definitions of hierarchical observation and of static, dynamic and recurring hierarchical information.
\end{proof}

\begin{proposition}
  \label{prop-closed-shifting}
Hierarchical observation as well as  static, dynamic and recurring
hierarchical information are 
preserved by initial shifting.
\end{proposition}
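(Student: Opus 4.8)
The plan is to split off the structural case first: hierarchical observation is a property of the \CGS $\CGS$ itself, with no reference to a distinguished position, so since an initial shifting only changes the initial position and leaves $\CGS$ (in particular all the relations $\obseq$) untouched, it preserves hierarchical observation trivially. Everything that follows therefore concerns only static, dynamic and recurring hierarchical information, which genuinely depend on the bundle $\FPlay(\CGS,\pos)$ of plays issued from the initial position.

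For these, I would isolate one transfer lemma and reduce all three cases to it. Fix an initial shifting $(\CGS,\pos')$ of $(\CGS,\pos)$ and, using reachability, a finite play $\fplay_{0}\in\FPlay(\CGS,\pos)$ ending in $\pos'$. For a finite play $\fplay\in\FPlay(\CGS,\pos')$ let $\fplay_{0}\fplay\in\FPlay(\CGS,\pos)$ denote the play obtained by concatenating $\fplay_{0}$ with $\fplay$, glued at their common endpoint $\pos'$. The lemma states: for every total preorder $\pref$ over $\Ag$, if $\fplay_{0}\fplay$ yields hierarchical information for $\pref$ in $(\CGS,\pos)$, then $\fplay$ yields hierarchical information for $\pref$ in $(\CGS,\pos')$. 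To prove it, fix $\ag\pref\agb$ and a play $\fplay'\in\FPlay(\CGS,\pos')$ with $\fplay\obseq\fplay'$ (so $|\fplay|=|\fplay'|$); then $\fplay_{0}\fplay\obseq\fplay_{0}\fplay'$, since prepending the common finite prefix $\fplay_{0}$ clearly preserves positionwise observational equivalence, and $\fplay_{0}\fplay'$ again lies in $\FPlay(\CGS,\pos)$. Hence $\fplay_{0}\fplay'\in\infset(\fplay_{0}\fplay)$, which by hypothesis is contained in $\infset[\agb](\fplay_{0}\fplay)$, so $\fplay_{0}\fplay\obseq[\agb]\fplay_{0}\fplay'$; restricting to the suffix after $\fplay_{0}$ yields $\fplay\obseq[\agb]\fplay'$. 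As $\fplay'$ was arbitrary, $\infset(\fplay)\subseteq\infset[\agb](\fplay)$ inside $\FPlay(\CGS,\pos')$, which is exactly hierarchy for $\pref$ at $\fplay$.

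From the lemma the three cases are immediate. Static: if all plays from $\pos$ yield hierarchical information for one fixed $\pref$, then so does every $\fplay_{0}\fplay$, hence so does every $\fplay\in\FPlay(\CGS,\pos')$, for that same $\pref$, so $(\CGS,\pos')$ yields static hierarchical information. Dynamic: apply the lemma to each $\fplay$ separately, taking for $\pref$ a preorder witnessing hierarchy of $\fplay_{0}\fplay$. Recurring: given an infinite play $\iplay\in\FPlay(\CGS,\pos')$, the infinite play $\fplay_{0}\iplay\in\FPlay(\CGS,\pos)$ has infinitely many finite prefixes yielding hierarchical information; all but the finitely many of length at most $|\fplay_{0}|$ are of the form $\fplay_{0}\fplay$ with $\fplay$ a finite prefix of $\iplay$, and the lemma turns each of them into a finite prefix of $\iplay$ that yields hierarchical information, so $\iplay$ has infinitely many such prefixes.

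I do not foresee a genuine difficulty; the one place to be careful is that the lemma only goes one way. The information set $\infset(\fplay_{0}\fplay)$ computed in the larger game may contain $\ag$-indistinguishable plays that do not run through $\pos'$ at step $|\fplay_{0}|$, so one cannot hope to recover it from $\infset(\fplay)$; but the argument never needs to --- it only pulls back the specific plays $\fplay_{0}\fplay'$ and uses the containment hypothesis on $\fplay_{0}\fplay$, which is exactly what static, dynamic and recurring hierarchy at $\fplay_{0}\fplay$ supply.
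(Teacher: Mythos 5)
Your proposal is correct and follows essentially the same route as the paper: hierarchical observation is dispatched as a property of the structure alone, and the other three notions are reduced to a prefix-removal lemma that is exactly the paper's Lemma~\ref{lem-hierarchy} (you prove it directly where the paper argues by contradiction, but the content --- prepend the common prefix $\fplay_{0}$, use the inclusion $\infset(\fplay_{0}\fplay)\subseteq\infset[\agb](\fplay_{0}\fplay)$, then restrict back to the suffix --- is identical). Your explicit derivations of the static, dynamic and recurring cases, which the paper leaves as ``easy to check'', are all sound, including the observation that only finitely many hierarchical prefixes of $\fplay_{0}\iplay$ can fail to extend $\fplay_{0}$.
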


This is obvious for hierarchical observation. For the other cases
  we establish Lemma~\ref{lem-hierarchy}
below. It is then easy to check that
Proposition~\ref{prop-closed-shifting} holds. 
\begin{lemma}
  \label{lem-hierarchy}
  If a finite play $\pos\cdot\fplay\cdot\pos'\cdot\fplay'$ yields hierarchical
  information in $(\CGS,\pos)$, so does $\pos'\cdot\fplay'$ in $(\CGS,\pos')$,
   with the same preorder among agents.
\end{lemma}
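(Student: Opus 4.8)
The plan is to unwind the definition of ``yielding hierarchical information'' and reduce the lemma to one bookkeeping fact about the synchronous perfect-recall indistinguishability relation: it is preserved and reflected by prepending a fixed prefix to a finite play. Write $\rho\egdef\pos\cdot\fplay$ and $\pi\egdef\pos'\cdot\fplay'$, so that the play in the hypothesis is the juxtaposition $\rho\cdot\pi$. Note that $\rho$ and $\pi$ are themselves legal finite plays, that $\pi$ starts in $\pos'$ and hence $\pi\in\FPlay(\CGS,\pos')$ (a suffix of a play is a play), and that the transition from $\last(\rho)$ to $\pos'$ occurring in $\rho\cdot\pi$ is legal. Let $\pref$ be a total preorder over $\Ag$ witnessing that $\rho\cdot\pi$ yields hierarchical information in $(\CGS,\pos)$; I will show that the \emph{same} $\pref$ witnesses that $\pi$ yields hierarchical information in $(\CGS,\pos')$.

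The observation I would isolate first is this: for every agent $\ag$ and every finite play $R\in\FPlay(\CGS,\pos')$ with $|R|=|\pi|$, the juxtaposition $\rho\cdot R$ is again a legal play in $\FPlay(\CGS,\pos)$, and $R\sim_{\ag}\pi$ holds if and only if $\rho\cdot R\sim_{\ag}\rho\cdot\pi$. The play $\rho\cdot R$ is legal because $\rho$ and $R$ are legal and are reglued at the common position $\pos'$: the last position of $\rho$ leads to $\pos'$ via the same joint move (and, in the nondeterministic case, the same Nature choice) already used at that point in $\rho\cdot\pi$, while all other transitions are inherited from $\rho$ or from $R$. The equivalence then follows directly from the definition of $\sim_{\ag}$ on finite plays, which requires equal length and componentwise indistinguishability: $\rho\cdot R$ and $\rho\cdot\pi$ share the prefix $\rho$ and have suffixes $R$ and $\pi$, so componentwise indistinguishability between them amounts exactly to componentwise indistinguishability between $R$ and $\pi$.

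Granting this observation, the argument is short. Fix $\ag\pref\agb$ and take any $R$ in the information set of $\pi$ for $\ag$ computed in $(\CGS,\pos')$, that is, $R\in\FPlay(\CGS,\pos')$ with $R\sim_{\ag}\pi$ (in particular $|R|=|\pi|$). By the observation, $\rho\cdot R$ lies in the information set of $\rho\cdot\pi$ for $\ag$ computed in $(\CGS,\pos)$. Since $\rho\cdot\pi$ yields hierarchical information for $\pref$ and $\ag\pref\agb$, this set is included in the corresponding set for $\agb$, so $\rho\cdot R\sim_{\agb}\rho\cdot\pi$; applying the observation once more, now for $\agb$, gives $R\sim_{\agb}\pi$. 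Hence the information set of $\pi$ for $\ag$ is contained in the one for $\agb$ for every $\ag\pref\agb$, which is precisely what it means for $\pos'\cdot\fplay'$ to yield hierarchical information in $(\CGS,\pos')$ with preorder $\pref$. Proposition~\ref{prop-closed-shifting} for the three information-based notions then follows by applying the lemma to every finite play issued from the new initial position.

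I do not expect a genuine obstacle here: the statement is essentially a definitional manipulation. The only two points I would treat with a line of care are (i) checking that the re-glued sequence $\rho\cdot R$ is honestly a play of $\CGS$ (in the nondeterministic setting one also recalls that Nature's choices along a play never enter the definition of $\sim_{\ag}$), and (ii) the small index computation behind the claim that prepending a common prefix preserves and reflects $\sim_{\ag}$.
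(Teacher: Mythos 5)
Your proof is correct and follows essentially the same route as the paper's: both reduce the lemma to the fact that prepending the common prefix $\pos\cdot\fplay$ preserves and reflects the synchronous perfect-recall relation $\sim_{\ag}$, the only cosmetic difference being that the paper argues by contradiction while you argue directly. Your explicit check that the re-glued sequence $\rho\cdot R$ is a legal play of $\FPlay(\CGS,\pos)$ is a detail the paper leaves implicit, and is a welcome addition.
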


\begin{proof}
  Assume that $\pos\cdot\fplay\cdot\pos'\cdot\fplay'$ yields
  hierarchical information in $(\CGS,\pos)$ with preorder $\pref$ over
  $\Ag$. Suppose towards a contradiction that there are agents
  $\ag,\agb\in\Ag$ such that $\ag\pref\agb$ but
  $\infset(\pos'\cdot\fplay')\not\subseteq\infset[\agb](\pos'\cdot\fplay')$. This
  means that there is $\pos'\cdot\fplay''\in\FPlay(\CGS,\pos')$ such
  that $\pos'\cdot\fplay'\obseq \pos'\cdot\fplay''$ but
  $\pos'\cdot\fplay'\not\obseq[\agb] \pos'\cdot\fplay''$. By
  definition of synchronous perfect recall relations we then have that
  $\pos\cdot\fplay\cdot\pos'\cdot\fplay'\obseq
  \pos\cdot\fplay\cdot\pos'\cdot\fplay''$ and
  $\pos\cdot\fplay\cdot\pos'\cdot\fplay'\not\obseq[\agb]
  \pos\cdot\fplay\cdot\pos'\cdot\fplay''$. This implies that
  $\infset(\pos\cdot\fplay\cdot\pos'\cdot\fplay')\not\subseteq
  \infset[\agb](\pos\cdot\fplay\cdot\pos'\cdot\fplay')$, which
  contradicts the fact that $\ag\pref\agb$. Therefore for all agents
  $\ag,\agb$ such that $\ag\pref\agb$ we have
  $\infset(\pos'\cdot\fplay')\subseteq\infset[\agb](\pos'\cdot\fplay')$,
  and thus $\pos'\cdot\fplay'$ yields hierarchical information with
  preorder $\pref$.
\end{proof}


Let $\classobs$ (resp. $\classstat$, $\classdyn$, $\classrec$) be the class of
pointed \CGS with hierarchical observation  (resp. static, dynamic, recurring hierarchical
information). We instantiate Theorem~\ref{theo-meta} to obtain three
decidability results for \ATLsi.

\begin{theorem}
  \label{theo-decidability-obs}
Model checking  \ATLsi is decidable on the class of \CGSs with
hierarchical observation.
\end{theorem}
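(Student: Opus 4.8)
The plan is to instantiate Theorem~\ref{theo-meta} with $\class=\classobs$, the class of pointed \CGS with hierarchical observation. To do so I must verify the two hypotheses of the meta-theorem for this class: that $\classobs$ is closed under initial shifting, and that the $\coal$-strategy problem with $\omega$-regular objective is decidable on $\classobs$.

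First I would dispatch the closure under initial shifting. This is the trivial part: hierarchical observation is a purely ``static'' condition on the observation relations $\obseq$ of the \CGS, which do not change when we replace the initial position $\pos$ by a reachable position $\pos'$ — the underlying game structure is unchanged. This is exactly the content of Proposition~\ref{prop-closed-shifting} (whose hierarchical-observation case is noted to be obvious), so I would simply cite it.

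Second I would establish decidability of the $\coal$-strategy problem on $\classobs$. Recall that by Definition~\ref{def-A-strategy} this problem reduces to the plain strategy problem on the abstraction $\abs{\Ag\setminus\coal}$, i.e.\ on a \emph{nondeterministic} \CGS where Nature plays for the agents outside $\coal$. The key observation is Proposition~\ref{prop-closed-abstraction}: abstraction removes agents without affecting the observations of the remaining ones, so if $(\CGS,\pos)$ has hierarchical observation then so does $(\abs{\Ag\setminus\coal},\pos)$. Now I can invoke the known result of Peterson and Reif~\cite{peterson2002decision} (also~\cite{DBLP:conf/lics/KupfermanV01}) that the strategy problem for multiplayer games with imperfect information, perfect recall, hierarchical observation and $\omega$-regular winning conditions is decidable. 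Hence the $\coal$-strategy problem with $\omega$-regular objective is decidable on $\classobs$.

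With both hypotheses verified, Theorem~\ref{theo-meta} directly yields that model checking \ATLsi is decidable on $\classobs$, which is the claim. I do not expect any real obstacle here, since all the work has been front-loaded into the meta-theorem and the two preservation propositions; the only point requiring a little care is making sure the abstraction step and the appeal to the external decidability result are aligned on the precise setting (nondeterministic \CGS with Nature, perfect recall, $\omega$-regular conditions), which is why Propositions~\ref{prop-closed-abstraction} and~\ref{prop-closed-shifting} were stated in exactly that generality.
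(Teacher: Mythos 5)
Your proposal is correct and follows essentially the same route as the paper: closure under initial shifting via Proposition~\ref{prop-closed-shifting}, preservation of hierarchical observation under abstraction via Proposition~\ref{prop-closed-abstraction} combined with the decidability of the strategy problem for games with hierarchical observation and $\omega$-regular objectives~\cite{DBLP:conf/lics/KupfermanV01}, and then an application of Theorem~\ref{theo-meta}. No gaps.
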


\begin{proof}
  By Proposition~\ref{prop-closed-shifting}, $\classobs$ is closed
  under initial shifting. It is proven in~\cite{DBLP:conf/lics/KupfermanV01} that
 the strategy problem is decidable for games with hierarchical observation
 and $\omega$-regular objectives.
Since, by Proposition~\ref{prop-closed-abstraction}, all
pointed nondeterministic \CGS obtained by abstracting agents from \CGS in
$\classobs$ also yield  hierarchical
observation, we get that the $\coal$-strategy problem with
$\omega$-regular objectives is decidable on $\classobs$.
We can therefore apply Theorem~\ref{theo-meta} on $\classobs$.
\end{proof}

It is proven in
  \cite{DBLP:conf/atva/BerwangerMB15} that
 the strategy problem with $\omega$-regular objectives is also decidable for games with static
 hierarchical information and for games with dynamic
hierarchical information. Since
Proposition~\ref{prop-closed-abstraction} and
Proposition~\ref{prop-closed-shifting} also hold for $\classstat$ and $\classdyn$,
with the same argument as in the proof of
Theorem~\ref{theo-decidability-obs}, we obtain the following results
as consequences of Theorem~\ref{theo-meta}:

\begin{theorem}
  \label{theo-decidability-stat}
Model checking  \ATLsi is decidable on the class of \CGSs with static hierarchical information.
\end{theorem}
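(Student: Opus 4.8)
The plan is to instantiate Theorem~\ref{theo-meta} on the class $\classstat$, in exactly the same way as was done for $\classobs$ in the proof of Theorem~\ref{theo-decidability-obs}. Two hypotheses of the meta-theorem have to be checked: that $\classstat$ is closed under initial shifting, and that the $\coal$-strategy problem with $\omega$-regular objectives is decidable on $\classstat$.

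For the first point I would simply invoke Proposition~\ref{prop-closed-shifting}, which via Lemma~\ref{lem-hierarchy} already covers the static case: if $(\CGS,\pos)$ yields static hierarchical information with preorder $\pref$ and $\pos'$ is reachable from $\pos$, then every finite play in $\FPlay(\CGS,\pos')$ occurs as a suffix of some finite play in $\FPlay(\CGS,\pos)$, hence yields hierarchical information for $\pref$ by the lemma, so $(\CGS,\pos')\in\classstat$.

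For the second point I would recall from \cite{DBLP:conf/atva/BerwangerMB15} that the strategy problem with $\omega$-regular winning conditions is decidable on (nondeterministic) \CGSs that yield static hierarchical information. Combining this with Proposition~\ref{prop-closed-abstraction}, which ensures that abstracting a coalition from a \CGS in $\classstat$ produces a nondeterministic \CGS that still yields static hierarchical information, we get that for every pointed \CGS $(\CGS,\pos)\in\classstat$, coalition $\coal\subseteq\Ag$ and $\omega$-regular condition $W$, the game $((\abs{\Ag\setminus\coal},\pos),W)$ is solvable; that is, the $\coal$-strategy problem with $\omega$-regular objectives is decidable on $\classstat$. Both hypotheses of Theorem~\ref{theo-meta} being met for $\classstat$, the theorem applies and yields the claim.

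Since all the substantive work — decidability of the strategy problem for static hierarchical information, and its stability under abstraction and initial shifting — is already isolated in the cited result and in Propositions~\ref{prop-closed-abstraction} and~\ref{prop-closed-shifting}, I do not expect a genuine obstacle here; the only point deserving a line of care is to note that the notion of static hierarchical information, although phrased above for deterministic \CGSs, applies verbatim to the nondeterministic \CGSs produced by abstraction, which is the setting in which both the strategy problem and the decidability result of \cite{DBLP:conf/atva/BerwangerMB15} are stated.
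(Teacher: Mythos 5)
Your proposal is correct and follows essentially the same route as the paper: instantiate Theorem~\ref{theo-meta} on $\classstat$ using Proposition~\ref{prop-closed-shifting} (via Lemma~\ref{lem-hierarchy}) for closure under initial shifting, Proposition~\ref{prop-closed-abstraction} for preservation under abstraction, and the decidability of the strategy problem for static hierarchical information from \cite{DBLP:conf/atva/BerwangerMB15}. Your closing remark about the notion applying verbatim to the nondeterministic \CGSs produced by abstraction is a reasonable extra precaution that the paper leaves implicit.
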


\begin{theorem}
  \label{theo-decidability-dyn}
Model checking  \ATLsi is decidable on the class of \CGSs with dynamic hierarchical information.
\end{theorem}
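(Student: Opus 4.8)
The plan is to instantiate Theorem~\ref{theo-meta} on the class $\classdyn$, exactly as was done for $\classobs$ in the proof of Theorem~\ref{theo-decidability-obs} and announced in the paragraph preceding the statement. Three ingredients are required: closure of $\classdyn$ under initial shifting, closure of dynamic hierarchical information under abstraction, and decidability of the strategy problem with $\omega$-regular objectives for games with dynamic hierarchical information. The first follows from Proposition~\ref{prop-closed-shifting} (which, via Lemma~\ref{lem-hierarchy}, is stated to hold for $\classdyn$ as well). The second follows from Proposition~\ref{prop-closed-abstraction}, which explicitly covers dynamic hierarchical information: abstracting a coalition $\coal$ from $\CGS$ deletes agents and leaves the observation relations $\obseq$ of the remaining agents untouched, so any total preorder witnessing dynamic hierarchical information for a play in $\CGS$ restricts to one in $\abs{\coal}$ — note here that a play of $\abs{\Ag\setminus\coal}$ is also a play of $\CGS$ (every transition of the abstraction comes from a transition of $\CGS$ for some completion of the joint move), so the information-set inclusions carry over. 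The third ingredient is the cited result of~\cite{DBLP:conf/atva/BerwangerMB15}, which establishes decidability of the strategy problem for $\omega$-regular objectives on games with dynamic hierarchical information.

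Putting these together: given the instance $(\CGS,\pos)\in\classdyn$, a coalition $\coal\subseteq\Ag$ and an $\omega$-regular winning condition $W$, the $\coal$-strategy problem asks to solve the game $((\abs{\Ag\setminus\coal},\pos),W)$. By Proposition~\ref{prop-closed-abstraction} the nondeterministic \CGS $\abs{\Ag\setminus\coal}$ still yields dynamic hierarchical information from $\pos$, and by the result of~\cite{DBLP:conf/atva/BerwangerMB15} this game is solvable. Hence the $\coal$-strategy problem with $\omega$-regular objectives is decidable on $\classdyn$. Combined with the closure of $\classdyn$ under initial shifting from Proposition~\ref{prop-closed-shifting}, the hypotheses of Theorem~\ref{theo-meta} are met, and we conclude that model checking \ATLsi is decidable on $\classdyn$.

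I do not expect any genuine obstacle here; the proof is a straightforward application of the meta-theorem, identical in structure to that of Theorem~\ref{theo-decidability-obs} with ``hierarchical observation'' replaced by ``dynamic hierarchical information'' and the reference to~\cite{DBLP:conf/lics/KupfermanV01} replaced by~\cite{DBLP:conf/atva/BerwangerMB15}. The only point deserving a moment of care — and it is already addressed by Propositions~\ref{prop-closed-abstraction} and~\ref{prop-closed-shifting} — is checking that both closure properties really do survive the passage from the deterministic \CGS to its nondeterministic abstraction and under change of initial position; since those propositions are asserted to hold for $\classdyn$, nothing new is needed. In writing it up I would simply state: ``By Propositions~\ref{prop-closed-shifting} and~\ref{prop-closed-abstraction}, $\classdyn$ is closed under initial shifting and dynamic hierarchical information is preserved by abstraction; since the strategy problem with $\omega$-regular objectives is decidable for games with dynamic hierarchical information~\cite{DBLP:conf/atva/BerwangerMB15}, the $\coal$-strategy problem with $\omega$-regular objectives is decidable on $\classdyn$, and Theorem~\ref{theo-meta} applies.''
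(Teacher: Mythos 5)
Your proposal is correct and matches the paper's own argument: the paper obtains Theorem~\ref{theo-decidability-dyn} precisely by instantiating Theorem~\ref{theo-meta} on $\classdyn$, invoking Proposition~\ref{prop-closed-shifting} for closure under initial shifting, Proposition~\ref{prop-closed-abstraction} for preservation of dynamic hierarchical information under abstraction, and the decidability result of~\cite{DBLP:conf/atva/BerwangerMB15} for the strategy problem with $\omega$-regular objectives. Your extra remark that every play of the abstraction is a play of the original structure is a sensible (and correct) elaboration of what Proposition~\ref{prop-closed-abstraction} already asserts.
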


Note that in fact, since $\classobs\subset\classstat\subset\classdyn$,
Theorem~\ref{theo-decidability-obs} and
Theorem~\ref{theo-decidability-stat} are also obtained as
corollaries of Theorem~\ref{theo-decidability-dyn}, but we wanted to
illustrate how Theorem~\ref{theo-meta} can be applied to obtain
decidability results for different classes of \CGS.


\begin{remark}
The last result in \cite{DBLP:conf/atva/BerwangerMB15} establishes
that the strategy problem is decidable for games with recurring
hierarchical information, but only for \emph{observable}
$\omega$-regular winning conditions, \ie, when all agents can tell
whether a play is winning or not. Now considering \ATLsi on
$\classdyn$ we could require atomic propositions to be observable for
all agents; in that case we could evaluate the inner-most strategy
quantifiers using the above-mentioned result. But then the fresh
atomic propositions that mark positions where these subformulas hold
(see the proof of Theorem~\ref{theo-meta}) would not, in general, be
observable by all agents. So on 
$\classrec$
we could obtain a decision procedure for the
fragment of \ATLsi without nested non-trivial strategy quantifiers,
where non-trivial means for coalitions other than the empty coalition
or the one made of all agents (which, we recall, are simply the \CTL
path quantifiers). We do not state it explicitly 
because it does not seem of much interest.
\end{remark}

Concerning complexity,
the strategy problem for games with imperfect information and
hierarchical observation is already  nonelementary \cite{DBLP:conf/focs/PnueliR90,peterson2001lower}, hence
the following result:

\begin{corollary}
  \label{cor-complexity}
  Model checking  \ATLsi is nonelementary on
  games with hierarchical observation, hence also for games with
  static or dynamic hierarchical information.
\end{corollary}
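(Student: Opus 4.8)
The plan is to prove nonelementary \emph{hardness} for $\classobs$ by a polynomial-time reduction from the strategy problem for games with hierarchical observation and $\omega$-regular objectives, which was recalled just above to be already nonelementary~\cite{DBLP:conf/focs/PnueliR90,peterson2001lower}. The statements for $\classstat$ and $\classdyn$ then come for free: since $\classobs\subseteq\classstat\subseteq\classdyn$, every hard instance for $\classobs$ is also an instance for the larger classes, so an elementary decision procedure for model checking \ATLsi on $\classstat$ or on $\classdyn$ would restrict to one on $\classobs$, a contradiction.

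For the reduction I would start from an instance of the strategy problem: a pointed nondeterministic \CGS $(\CGS,\pos)$ with hierarchical observation and an $\omega$-regular winning condition $W$, which by the cited lower bounds we may assume is given by an \LTL formula $\psi$ over $\APf$. First I would replace the nondeterminism of Nature by a fresh, fully informed agent: let $\CGS'$ be the deterministic \CGS over the agent set $\Ag\cup\{\mathsf{Nat}\}$ with the same positions, the same labelling, the same relations $\obseq$ for $\ag\in\Ag$, with $\obseq[\mathsf{Nat}]$ the identity relation, and with $\mathsf{Nat}$'s available moves at a position indexing that position's Nature-successors (using footnote~\ref{footnote-models} to allow position-dependent move sets). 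Since $\obseq[\mathsf{Nat}]\subseteq\obseq$ for every $\ag\in\Ag$, any total preorder witnessing hierarchical observation for $\CGS$ extends to one for $\CGS'$ by placing $\mathsf{Nat}$ as a most-informed agent, so $(\CGS',\pos)\in\classobs$. The reduction then outputs the \ATLsi instance $\bigl(\Estrat[\Ag]\psi,(\CGS',\pos)\bigr)$, which is a legitimate model-checking instance because $\Estrat[\Ag]\psi$ is a closed \ATLsi formula (every temporal operator of $\psi$ lies within the scope of $\Estrat[\Ag]$); the whole construction is plainly polynomial.

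For correctness I would unfold the semantics of $\Estrat[\Ag]$: $\CGS',\pos\models\Estrat[\Ag]\psi$ holds iff there is a strategy profile for $\Ag$ all of whose outcomes satisfy $\psi$, that is, iff $\Ag$ has a profile that enforces $\psi$ against every behaviour of $\mathsf{Nat}$, who is outside the coalition; by construction of $\CGS'$ this is exactly the existence of a strategy profile for $\Ag$ enforcing $W$ against Nature in the original game, i.e.\ a positive answer to the strategy problem. The step needing the most care is precisely this bookkeeping around the adversary — checking that turning Nature into the agent $\mathsf{Nat}$ (i) preserves hierarchical observation, which is immediate since $\mathsf{Nat}$ sees strictly more than everyone, and (ii) makes the universal quantification over outcomes in the semantics of $\Estrat[\Ag]$ faithfully capture ``enforcing $W$ against Nature''; both are routine from the definitions of Section~\ref{sec-def}. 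Beyond this there is no genuinely new argument: the single-strategic-operator formula $\Estrat[\Ag]\psi$ encodes the strategy problem, so its nonelementary lower bound transfers verbatim to model checking \ATLsi on $\classobs$, and hence on $\classstat$ and $\classdyn$.
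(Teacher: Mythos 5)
Your proposal is correct and follows the same route the paper intends: the paper states the corollary with no explicit proof beyond the observation that the strategy problem on games with hierarchical observation is already nonelementary and that \ATLsi model checking subsumes it via a single formula $\Estrat[\Ag]\psi$. Your write-up merely fills in the routine bookkeeping (replacing Nature by a fully informed agent, checking preservation of hierarchical observation, and the inclusion $\classobs\subseteq\classstat\subseteq\classdyn$), all of which is consistent with the paper's definitions.
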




We now turn to \ATL with imperfect information and strategy
context, and study its model-checking problem.


\section{\ATLititle with strategy context}
\label{sec-ATLsc}

While in \ATL strategies for all agents are forgotten each time a new
strategy quantifier is met, in \ATL with strategy context (\ATLsc)
\cite{DBLP:conf/lfcs/BrihayeLLM09,DLM10,DBLP:journals/iandc/LaroussinieM15} agents keep using the
same strategy as long as the formula does not say otherwise.
In this section we consider \ATLsc
with imperfect information (\ATLsci). As far as we know, the
only existing work on this logic is
\cite{DBLP:journals/corr/LaroussinieMS15}, which proved
 its model-checking problem to be decidable in the case where
all agents have the same observation of the game. We extend
significantly this
result by establishing that the model-checking problem
is decidable as long as strategy quantification is
\emph{hierarchical}, in the sense that if there is a strategy
quantification for agent $a$ nested in a strategy quantification for
agent $b$, then $b$ should observe no more than $a$. In other terms,
innermost strategic quantifications should concern agents who observe more.

\subsection{Syntax and semantics}
\label{sec-syntax-ATLsci}

The models are still \CGSs.  
To remember which agents are currently bound to a strategy, and what
these strategies are, the semantics uses
\emph{strategy contexts}.
Formally, a   strategy context for a set of agents $\coala\subseteq\Ag$  is a strategy profile
 $\scon$. We define the composition of strategy contexts as
   follows. If $\scon[\coala]$ is a strategy context for $\coala$ and
   $\scon[\coal]$ is a new strategy profile for coalition  $\coal$,
we let   $\scon[\coal]\comp\scon$ be the strategy context for
$\coal\union\coala$ defined as 
   $\scon[\coal\union \coala]:a \mapsto
   \begin{cases}
     \scon[\coal](a) & \mbox{if }a\in \coal, \\
     \scon(a) & \mbox{otherwise}
   \end{cases}
   $.
   
   So if $a$  is assigned a 
   strategy by $\strat_{\coal}$, her strategy in
   $\scon[\coal]\comp\scon$ is $\strat_{\coal}(a)$. If
   she is not assigned a strategy by $\strat_{\coal}$ her strategy
   remains the one given by $\scon$, if any.
   
   Also, given a strategy context $\scon$ and a set of agents
   $\coal\subseteq\Ag$, we let $\forget{\scon}$ be the strategy context obtained by restricting $\scon$ to the domain
   $\coala\setminus \coal$.

   Finally, because agents who do not change their strategy keep
   playing the one they were assigned, if any, we cannot
    forget the past at each strategy quantifier, as 
    in the  semantics of \ATLsi (see Section~\ref{sec-def}). We thus define the outcome of
   a strategy profile $\strat_{\coal}$ after a finite play $\fplay$,
   written $\out(\fplay,\strat_{\coal})$, as the set of infinite plays
   $\iplay$ that start with $\fplay$ and then follow $\strat_{\coal}$:
   $\iplay\in\out(\fplay,\strat_{\coal})$ if $\iplay=\fplay\cdot\iplay'$
 for some $\iplay'$, and 
 for all $i\geq
|\fplay|-1$, there exists a joint move $\jmov\in\Mov^{\Ag}$ such that
$\trans(\iplay_{i}, \jmov)=\iplay_{i+1}$ and for each $a\in \coal$,
$\mov_{a}=\strat_{a}(\iplay[0,i])$.
   
   To differentiate from  \ATLs, in \ATLssc the strategy quantifier
   for a coalition $\coal$ is written $\Estratsc$ instead of $\Estrat$.  \ATLssc also has an
   additional operator, $\relstrat$, that releases agents in
   $\coal$ from their current strategy, if they have one. 
The syntax of \ATLssci is the same as that of \ATLssc and  is thus
given by the following grammar:
\[  \phi ::= p \mid \neg \phi \mid \phi \ou \phi \mid \Estratsc \phi \mid \relstrat \phi
  \mid \X\phi \mid \phi\until\phi,\]
  where $p\in\AP$ and $\coal\subseteq \Ag$.

  \begin{remark}
    \label{rem-syntax}
    In \cite{DBLP:journals/iandc/LaroussinieM15} the syntax of \ATLssc
     contains in addition operators $\Estratsc[\overline{\coal}]$ and $\relstrat[\overline{\coal}]$ for
    complement coalitions. While they add expressivity when the set of
    agents is not fixed, and are thus of interest when considering 
    expressivity or satisfiability, they are
    redundant if we consider model
    checking, which is our case in this work. To simplify presentation we thus
    choose not to consider them here. 
  \end{remark}
  
  The semantics of \ATLssci is defined with
  regards to a \CGS
  $\CGS=(\setpos,\trans,\val,\{\obseq\}_{\ag\in\Ag})$, an infinite
  play $\iplay$, a position $i\in\setn$ along this play, and a strategy
  context $\scon$. The  semantics is defined by induction
  on formulas:
\[\begin{array}{ll}
  \CGS,\iplay,i\modelssc  p & \mbox{if }p\in\val(\iplay_{i})  \\
  \CGS,\iplay,i\modelssc  \neg \phi & \mbox{if } \CGS,\iplay,i\not\modelssc \phi  \\
  \CGS,\iplay,i\modelssc  \phi\ou\phi' & \mbox{if } \CGS,\iplay,i\modelssc \phi \mbox{ or
  }\CGS,\iplay,i\modelssc \phi' \\
  \CGS,\iplay,i\modelssc  \Estratsc\phi & \mbox{if there exists a strategy
    profile $\strat_{\coal}$ s.t.}\\ & \mbox{for all
  $\iplay'\in\out(\iplay[0,i],\strat_{\coal}\comp\scon)$, }
  \CGS,\iplay',i\modelssc[\strat_{\coal}\comp\scon]\phi\\
\CGS,\iplay,i \modelssc \relstrat\phi & \mbox{if }\CGS,\iplay,i\modelssc[\forget{\scon}]\phi \\
\CGS,\iplay,i \modelssc \X\phi & \mbox{if }\CGS,\iplay,i+1\modelssc\phi \\
\CGS,\iplay,i \modelssc \phi\until\phi' & \mbox{if there exists $j\geq i$
  s.t. }\CGS,\iplay,j\modelssc\phi'\\
& \mbox{and, for all $k$ such that $i\leq k
  <j$, } \CGS,\iplay,k\modelssc\phi.
\end{array}\]

The notion of closed formula is as defined in Section~\ref{sec-def}
and once more,
the semantics of a closed formula $\phi$
being independent from the future, we may write $\CGS,\pos\modelssc \phi$
instead of $\CGS,\iplay,0\modelssc\phi$ for any infinite play $\iplay$
that starts in position $\pos$. We also write $\CGS,\pos\models\phi$ if
$\CGS,\pos\modelssc[\strat_{\emptyset}]\phi$, that is if $\phi$ holds in
$\pos$ with the empty strategy context.

The \emph{model-checking problem for \ATLssci}
consists in deciding, given a closed \ATLssci formula $\phi$ and a
finite pointed \CGS $(\CGS,\pos)$, whether $\CGS,\pos\models\phi$.

We now present  \QCTLs with imperfect information, or \QCTLsi for short, before
proving our main result on the model-checking problem for \ATLssci by
reducing it to the model-checking problem for a decidable fragment of \QCTLsi.


\subsection{\QCTLstitle with imperfect information}
\label{sec-QCTL}

Quantified \CTLs, or \QCTLs for short, is an extension of \CTLs with
second-order quantifiers on atomic propositions that has been
well studied~\cite{Sis83,Kup95,KMTV00,DBLP:journals/corr/LaroussinieM14}.
It has recently been  further extended to take into account
imperfect information, resulting in the logic called \QCTLs with
imperfect information, or \QCTLsi~\cite{berthon2016qctli,BMMRV17}. We briefly
present this logic, as well as a  decidability result on its
model-checking problem proved in \cite{berthon2016qctli,BMMRV17} and that
we rely on
to establish our result on the model checking of \ATLssci.

Imperfect information is incorporated  into \QCTLs by
considering Kripke models with  internal structure in the form of
local states,  like in
distributed
  systems (see for instance~\cite{halpern1989complexity}), and then
 parameterising
   quantifiers on atomic propositions   with observations that define what
  portions of the  states a quantifier can
  ``observe''. The semantics is then adapted to capture the idea
  of quantification on atomic propositions being made with  partial
  observation.

Let us fix a collection
$\{\setlstates_{i}\}_{i\in [n]}$ of $n$ disjoint finite sets
of \emph{local states}. We also let $\Dirtreei[n]=\setlstates_{1}\times\ldots\times\setlstates_{n}$.

\begin{definition}
A \emph{compound Kripke structure} (\CKS) over $\APf$ is a Kripke structure
  $\CKS=(\setstates,\relation,\lab)$ such that 
  $\setstates\subseteq \Dirtreei[n]$. 
\end{definition}



The syntax of \QCTLsi is  that of \QCTLs, except that
quantifiers over atomic propositions are parameterised by a set of
indices that defines what local states the quantifier can
``observe''.
  It is thus defined by the following grammar:
  \begin{align*}
  \phi\egdef &\; p \mid \neg \phi \mid \phi\ou \phi \mid \E \phi \mid
  \existsp[p]{\obs} \phi  \mid \X \phi \mid
  \phi \until \phi
\end{align*}
where $p\in\AP$ 
and $\obs\subset \setn$
is a finite set of indices.
We use standard abbreviations:
$\top\egdef p\ou\neg p$, $\perp\egdef\neg\top$, $\F\phi \egdef \top \until \phi$, $\G\phi \egdef
\neg \F \neg \phi$ and $\A\phi \egdef \neg\E\neg\phi$.

A finite set  $\obs\subset\setn$ is called an \emph{observation}, and
two states  $\state=(\lstate_{1},\ldots,\lstate_{n})$ and
$\state'=(\lstate'_{1},\ldots,\lstate'_{n})$ 
are \emph{$\obs$-indistinguishable},
written $\state\oequiv\state'$, if for all $i\in [n]\inter\obs$, it
holds that
$\lstate_{i}=\lstate'_{i}$.

The intuition is that a quantifier with observation $\obs$ must choose the
valuation of atomic propositions \emph{uniformly} with respect to
$\obs$.   
Note that in~\cite{berthon2016qctli}, two semantics are considered for
\QCTLsi, just like in~\cite{DBLP:journals/corr/LaroussinieM14} for
\QCTLs: the structure semantics and the tree semantics. In the former,
 formulas are evaluated directly on the
structure, while in the latter the structure is first unfolded into an
infinite tree.
Here we only present
the tree semantics, as it is this one that allows us to capture agents with
perfect recall. 
But we first need a few more definitions.

  For $p\in\AP$, two labelled trees $\ltree=(\tree,\lab)$ and
  $\ltree'=(\tree',\lab')$ are \emph{equivalent modulo $p$},
  written $\ltree\Pequiv\ltree'$, if $\tree=\tree'$ and for each node $\noeud\in\tree$,
  $\lab(\noeud)\setminus\{p\}=\lab'(\noeud)\setminus\{p\}$. So
  $\ltree\Pequiv\ltree'$ if they are the same trees, except for the
  labelling of proposition $p$.

This notion of equivalence modulo $p$ is the one used to define
quantification on atomic propositions in  \QCTLs: intuitively, an
existential quantification over $p$ chooses a new labelling for
valuation $p$, all else remaining the same, and the evaluation of the formula continues from the
current node with the new labelling. For imperfect
information we need to express the fact that this new labelling for a
proposition is done uniformly with regards to the quantifier's observation. 

First, we define the notion of indistinguishability between two nodes
in the unfolding of a \CKS.
Let  $\obs$ be an observation, let $\tree$ be an $\Dirtreei[n]$-tree
(which may be obtained by unfolding some pointed \CKS), and let
   $\noeud=\state_{0}\ldots\state_{i}$ and
   $\noeud'=\state'_{0}\ldots\state'_{j}$ be two nodes in $\tree$. 
   The nodes
   $\noeud$ and $\noeud'$ are
   \emph{$\obs$-indistinguishable}, written $\noeud\oequivt\noeud'$, if
   $i=j$ and for all $k\in \{0,\ldots,i\}$, we have
   $\state_{k}\oequiv\state'_{k}$. Observe that this definition
   corresponds to the notion of synchronous perfect recall in \CGS
   (see Section~\ref{sec-def}). 
We  now define what it means for the labelling of an atomic
proposition to be uniform with regards to an observation.
   
   \begin{definition}
     \label{def-uniform}
          Let $\ltree=(\tree,\lab)$ be a labelled $\Dirtreei[n]$-tree,
          let $p\in\AP$ be an atomic proposition  and
          $\obs\subset\setn$ an observation. 
Tree $\ltree$ is \emph{$\obs$-uniform in $p$} if for every pair of nodes
 $\noeud,\noeud'\in\tree$ such that $\noeud\oequivt\noeud'$, we have
 $p\in\lab(\noeud)$ iff $p\in\lab(\noeud')$.
   \end{definition}

 The satisfaction relation $\modelst$ ($t$ is for \emph{tree semantics}) is
now  defined  as follows, where   $\ltree=(\tree,\lab)$ is
a \labeled $\Dirtreei[n]$-tree, 
 $\tpath$ is a path in $\tree$ and $i\in\setn$ a position along that branch:
\[\begin{array}{ll}
  \ltree,\tpath,i\modelst p & \mbox{if }  p\in\lab(\tpath_{i}) \\
    \ltree,\tpath,i\modelst \neg \phi &
    \mbox{if } \ltree,\tpath,i\not\modelst \phi \\
        \ltree,\tpath,i\modelst  \phi \ou \phi'&
    \mbox{if } \ltree,\tpath,i \modelst \phi \mbox{ or
    }\ltree,\tpath,i\modelst \phi' \\
    \ltree,\tpath,i\modelst \E\phi &
    \mbox{if } \mbox{there exists }\tpath'\in\tPaths(\tpath_{i})\mbox{ such
      that }\ltree,\tpath',0\modelst \phi \\
  \ltree,\tpath,i\modelst \existsp{\obs} \phi &
  \mbox{if there exists }\ltree'\Pequiv[p]\ltree \mbox{ such that
  }\ltree'\mbox{ is $\obs$-uniform in $p$ and }\ltree',\tpath,i\modelst\phi\\
    \ltree,\tpath,i\modelst \X\phi &
    \mbox{if }  \ltree,\tpath,i+1\modelst \phi \\
        \ltree,\tpath,i\modelst \phi\until\phi' &
    \mbox{if } \mbox{there exists }j\geq i \mbox{ such that
    }\ltree,\tpath,j\modelst\phi' \mbox{ and for }i\leq k <j,\; \ltree,\tpath,j\modelst\phi
\end{array}\]

Similarly to \ATLssci, we say that a \QCTLsi formula is
\emph{closed} if all temporal operators are in the scope of a path
quantifier. The semantics of such formulas depending only on the
current node, for a closed formula $\phi$ we 
 may write  $\ltree\modelst\phi$ for $\ltree,\racine\modelst\phi$,
where $\racine$ is the root of $\ltree$, and given a pointed \CKS $(\CKS,\state)$  and a
\QCTLsi formula $\phi$, we write $\CKS,\state\modelst\phi$ if $\unfold[\CKS]{\state}\modelst\phi$.

\begin{remark}
  \label{rem-syntax-QCTLi}
  In~\cite{berthon2016qctli} the syntax is presented  with
  path formulas distinguished from state formulas, and the semantics
  is defined accordingly. To make the presentation more uniform with
  that of \ATLsci we chose here a different, but equivalent, presentation. 
\end{remark}

\begin{remark}
  \label{rem-perf-quant}
  Note that when $n$ is fixed, the propositional quantifier with
  perfect information from \QCTLs
 is equivalent to  the \QCTLsi quantifier
that  observes all the components, \ie, the quantifier parameterised
with observation $[n]$. 
\end{remark}

The model-checking problem for \QCTLsi is the following: given a
closed \QCTLsi formula $\phi$ and a finite pointed \CKS $(\CKS,\state)$, decide
whether $\CKS,\state\modelst\phi$.

We now define the class of \QCTLsi formulas for which the
model-checking problem is known to be decidable with the tree semantics.

\begin{definition}
  \label{def-hier-formulas}
    A \QCTLsi formula $\phi$ is \emph{hierarchical} if for all
  subformulas $\phi_{1},\phi_{2}$ of the form
  $\phi_{1}=\existsp[p_{1}]{\obs_{1}}\phi'_{1}$ and
  $\phi_{2}=\existsp[p_{2}]{\obs_{2}}\phi'_{2}$ where  
  $\phi_{2}$
  is a subformula of $\phi'_{1}$, we have $\obs_{1}\subseteq\obs_{2}$.
\end{definition}

The following result is proved in~\cite{berthon2016qctli}, where \QCTLsih is the set of hierarchical \QCTLsi formulas:

\begin{theorem}[\cite{berthon2016qctli}]
  \label{lab-theo-QCTLi}
  Model checking  \QCTLsih with tree semantics is decidable.
\end{theorem}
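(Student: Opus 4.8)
The plan is to attack this automata-theoretically, reducing the model-checking problem to emptiness/acceptance questions for alternating and nondeterministic parity tree automata over infinite trees, and handling the imperfect-information quantifiers by a \emph{narrowing} operation whose soundness is underpinned by the hierarchy condition. Since we work with the tree semantics, deciding $\CKS,\state\modelst\phi$ is the same as evaluating $\phi$ on the regular $\Dirtreei[n]$-tree $\unfold[\CKS]{\state}$. I would build, by induction on $\phi$, an automaton recognising the labelled trees satisfying $\phi$, so that model checking reduces to running this automaton on $\unfold[\CKS]{\state}$, which is decidable.

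For atomic propositions, the Boolean connectives, the temporal operators $\X$ and $\until$, and the path quantifier $\E$, the usual alternating parity tree automaton constructions apply, exactly as in the classical automata-theoretic treatment of \CTLs and \QCTLs; whenever a step requires a nondeterministic automaton --- in particular, just before eliminating a propositional quantifier by projection --- I would invoke the Muller--Schupp simulation theorem to turn an alternating parity automaton into an equivalent nondeterministic one.

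The one genuinely new case is $\existsp[p]{\obs}\psi$, and this is where imperfect information and the hierarchy enter. The key remark is that a $p$-labelling of a tree is $\obs$-uniform precisely when it factors through the quotient of the tree by $\oequivt$, i.e.\ through the $\obs$-\emph{narrowing} obtained by projecting every direction onto the components indexed by $\obs$; hence $\existsp[p]{\obs}\psi$ amounts to: there is an \emph{arbitrary} $p$-labelling of the $\obs$-narrowing whose pull-back satisfies $\psi$. Accordingly, starting from a nondeterministic automaton for $\psi$ I would (i) transform it into a nondeterministic automaton over $\obs$-narrowed trees, on which (ii) the quantifier on $p$ becomes ordinary projection and thus stays nondeterministic, and then (iii) transfer the result back to the enclosing tree. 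The hierarchy is what makes the successive narrowings cohere: along every chain of nested quantifiers the observations only grow, so processing the quantifiers from innermost to outermost means narrowing first by the largest observation and then, since each enclosing observation $\obs_{1}$ satisfies $\obs_{1}\subseteq\obs_{2}$, by a decreasing chain of observations, each step being a further projection of the tree produced by the previous one. If two nested observations were incomparable there would be no tree on which to continue --- and indeed the problem is undecidable for non-hierarchical formulas.

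The hard part is the correctness of the narrowing step. The narrowing map identifies $\oequivt$-equivalent nodes, which in general lie above completely different subtrees and carry different valuations of the non-quantified propositions, so information is genuinely lost. I expect the main work to be in showing that, over the trees at hand and exploiting the regularity of $\unfold[\CKS]{\state}$, a nondeterministic tree automaton running on the narrowed tree can faithfully simulate the original automaton on the wide tree --- synchronising the choices it guesses along indistinguishable nodes and piecing them together consistently --- while remaining compatible with the parity acceptance condition. This is also precisely why the automaton must be nondeterminised before it is narrowed, since the simulation needs a single coherent bundle of choices per node of the narrowing.
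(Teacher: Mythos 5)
This theorem is not proved in the present paper; it is imported from~\cite{berthon2016qctli}, so your proposal has to be measured against the proof given there. Your overall architecture coincides with it: an inductive construction of parity tree automata, Muller--Schupp simulation to enable projection for the propositional quantifiers, a narrowing operation to capture $\obs$-uniformity, and the hierarchy condition guaranteeing that successive narrowings compose from the innermost quantifier outwards. The route is the right one.

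There are, however, two concrete gaps in the one case that carries all the weight, $\existsp[p]{\obs}\psi$. First, you narrow in the wrong place. You insist on first nondeterminising the automaton for $\psi$ and then building a \emph{nondeterministic} automaton over the narrowed trees, and you correctly flag this as ``the hard part''---but as stated it does not go through: a node of the narrowed tree has exponentially many (in its depth) preimages in the wide tree, an accepting run of a nondeterministic automaton may assign them pairwise distinct states, and tracking that unbounded set of runs is incompatible with the parity condition. The standard resolution (Kupferman--Vardi, and the one used in the cited proof) is to narrow the \emph{alternating} automaton, which is a simple linear construction: a copy sent in wide direction $x$ is sent in direction $\mathrm{proj}_{\obs}(x)$, and several copies travelling down the same narrow branch is exactly what alternation is for. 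Muller--Schupp is applied \emph{after} narrowing, and only then does one project on $p$. Your justification for the opposite order (``the simulation needs a single coherent bundle of choices per node'') describes the obstacle, not its solution; the fix is a reordering, but it is the reordering that makes the step routine rather than ``hard''. Second, you observe that narrowing loses the valuations of the non-quantified propositions, which genuinely differ across $\oequivt$-equivalent nodes, and you defer this to ``exploiting the regularity of the unfolding''. The cited proof makes this precise in a way you would need to reproduce: the automaton built for each subformula reads only the \emph{quantified} propositions (whose labellings are uniform by construction, hence well defined on the narrowed tree) and recovers the valuations of the free propositions by carrying the current state of the finite \CKS\ in its own state space. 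Without some such device the narrowed automaton has no access to the labels it needs, and the induction does not close.
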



\subsection{Model checking \ATLsscititle}
\label{sec-mc-ATLsc}

We establish that model checking \ATLssci is decidable on a class of
instances whose definition 
 relies on the notion of
\emph{hierarchical observation}.

\begin{definition}
  \label{def-hierarchical}
  Let  $\CGS=(\setpos,\trans,\val,\{\obseq\}_{\ag\in\Ag})$ be a \CGS,
  and let $a,b\in\Ag$ be two agents. \emph{Agent $a$ observes no more
    than agent $b$ in $\CGS$}, written $a\obsless b$, if for every
  pair of positions $\pos,\pos'\in\setpos$, $\pos\obseq[b]\pos'$ implies
  $\pos\obseq[a]\pos'$. We say that $A\subseteq \Ag$ is
  \emph{hierarchical in $\CGS$} if
  $\obsless$ is a total preorder on $A$.
\end{definition}

If a set of agents $A$ is hierarchical in a \CGS $\CGS$, we thus may
talk about maximal and minimal agents in $A$, referring to 
maximal and minimal elements of $A$ for the relation $\obsless$.

The essence of the requirement that makes the problem decidable is the
same as for the decidability result on \QCTLsi
(Theorem~\ref{lab-theo-QCTLi}): nesting of quantifiers (here, strategy quantifiers)
should be hierarchical, with those observing more inside those
observing less. However, unlike in \QCTLsi, in \ATLssci observations
are not part of formulas, but rather they are given by the models. We thus
define the notion of hierarchical \ATLssci formula with
respect to a  \CGS:

\begin{definition}
  \label{def-hierarchical-formula}
Let $\Phi$ be  an \ATLssci formula and let $\CGS$ be a \CGS. We say that
$\Phi$ is \emph{hierarchical in $\CGS$} if:
\begin{itemize}
\item for every
  subformula $\phi$ of the form
  $\phi=\Estratsc[A]\phi'$, $A$ is hierarchical in $\CGS$, and
\item for all
  subformulas $\phi_{1},\phi_{2}$ of the form
  $\phi_{1}=\Estratsc[A_{1}]\phi'_{1}$ and
  $\phi_{2}=\Estratsc[A_{2}]\phi'_{2}$ where  
  $\phi_{2}$
  is a subformula of $\phi'_{1}$, maximal agents of $A_{1}$  observe no
  more than minimal agents of $A_{2}$.
\end{itemize}
   An instance $(\Phi,(\CGS,\pos))$ of the model-checking
problem for \ATLssci is \emph{hierarchical} if $\Phi$ is hierarchical
in $\CGS$.
\end{definition}

In the rest of the section we establish
the following:

\begin{theorem}
  \label{theo-ATLsc}
  Model checking \ATLssci is decidable on the class of hierarchical instances.
\end{theorem}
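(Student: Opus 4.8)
The plan is to follow the strategy announced just before the theorem: reduce the model-checking problem for \ATLssci on hierarchical instances to the model-checking problem for hierarchical \QCTLsi formulas with tree semantics, and then invoke Theorem~\ref{lab-theo-QCTLi}. This mirrors the \QCTLs-based decidability proof for \ATLssc in~\cite{DBLP:journals/iandc/LaroussinieM15}, so the work is to adapt that translation to the imperfect-information setting and, crucially, to check that the hierarchy condition on the instance yields a hierarchical \QCTLsi formula.

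First I would set up the encoding. Given a \CGS $\CGS=(\setpos,\trans,\val,\{\obseq\}_{\ag\in\Ag})$, I view it as a \CKS by giving each position a tuple of local states, with one component per agent recording that agent's observational class of the position (plus, if needed, one extra component carrying the ``true'' identity of the position, observed only by a maximally-informed spectator quantifier). Then $a\obseq a'$ on positions corresponds exactly to $\obs_a$-indistinguishability, where $\obs_a$ is the index of agent $a$'s component; and $a\obsless b$ becomes $\obs_b\subseteq\obs_a$ once one notes that a finer observation corresponds to observing \emph{more} components (here one must be careful: observing more information means a \emph{larger} observation set only under the encoding where indistinguishability forces equality on the components in the set; I would choose the encoding so that this is literally the case, possibly by using nested/cumulative observation sets). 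A strategy for agent $a$ is then encoded, as in~\cite{DBLP:journals/iandc/LaroussinieM15}, by a family of fresh atomic propositions $p^a_{\mov}$, one per move $\mov\in\Mov$, whose valuation along the tree-unfolding selects, at each node, the move the strategy plays; the uniformity requirement ``$\ltree'$ is $\obs_a$-uniform in $p^a_{\mov}$'' is exactly the perfect-recall imperfect-information constraint $\strat(\fplay)=\strat(\fplay')$ whenever $\fplay\sim_a\fplay'$, thanks to the observation (made in the excerpt) that $\oequivt$ on the unfolding is synchronous perfect recall. The translation $\tr{\cdot}$ then maps $\Estratsc[A]\phi$ to a block of existential propositional quantifiers $\existsp[p^a_{\mov}]{\obs_a}$ (one per agent $a\in A$ and move $\mov$), conjoined with a ``well-formedness'' formula asserting that for each $a$ the $p^a_{\mov}$ partition the nodes (exactly one move chosen at each node), followed by $\tr{\phi}$ relativised so that temporal modalities only range over plays consistent with the propositions currently selecting moves for the agents in the context. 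The operator $\relstrat[A]\phi$ is handled by re-quantifying (or rather by a bookkeeping that drops the constraints for agents in $A$), exactly as in the perfect-information proof; and the relativisation of $\X$, $\until$ uses the transition function $\trans$ written as a \CTL$^*$-definable relation over the move propositions, again as in~\cite{DBLP:journals/iandc/LaroussinieM15}. One then proves by induction on $\Phi$ that $\CGS,\pos\models\Phi$ iff $\CKS_{\CGS},\pos\modelst\tr{\Phi}$, the inductive step for $\Estratsc$ being a direct translation between strategy profiles for $A$ and $\obs_a$-uniform labellings of the move propositions.

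The key point to verify is that if $(\Phi,(\CGS,\pos))$ is a hierarchical instance then $\tr{\Phi}$ is a hierarchical \QCTLsi formula in the sense of Definition~\ref{def-hier-formulas}. Take two quantifier blocks of $\tr{\Phi}$, coming from subformulas $\phi_1=\Estratsc[A_1]\phi'_1$ and $\phi_2=\Estratsc[A_2]\phi'_2$ with $\phi_2$ a subformula of $\phi'_1$; the corresponding \QCTLsi quantifiers have observations $\obs_a$ for $a\in A_1$ (outer) and $\obs_b$ for $b\in A_2$ (inner). The second clause of Definition~\ref{def-hierarchical-formula} says maximal agents of $A_1$ observe no more than minimal agents of $A_2$, i.e. $a\obsless b$ for all such $a,b$ (using that $A_1$, $A_2$ are each hierarchical by the first clause, so ``maximal'' and ``minimal'' are well-defined and every agent of $A_1$ is $\obsless$ every agent of $A_2$ through the extremes), and under the encoding $a\obsless b$ is precisely $\obs_a\subseteq\obs_b$. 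Hence every outer observation is contained in every inner one, which is exactly the hierarchical-formula condition. (The well-formedness subformulas introduce only quantifiers whose observation is among the $\obs_a$ already present, so they do not break the hierarchy; and the auxiliary ``identity'' proposition, if used, is quantified with the full observation $[n]$ at the very outside, which is $\subseteq$ everything.) Having established this, $\tr{\Phi}$ is hierarchical, $\CKS_{\CGS}$ is a finite \CKS, and Theorem~\ref{lab-theo-QCTLi} gives decidability of $\CKS_{\CGS},\pos\modelst\tr{\Phi}$, hence of $\CGS,\pos\models\Phi$.

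I expect the main obstacle to be purely bookkeeping rather than conceptual: getting the relativisation right so that, inside nested strategy contexts, the temporal operators quantify over exactly the plays that respect all currently-bound strategies, while correctly implementing $\relstrat[A]$ (dropping constraints) and the subtlety that in \ATLssc, unlike \ATLs, the past is not forgotten at a quantifier — so the encoding must let a freshly quantified move-proposition for agent $a$ be reselected along the whole tree while the outcome relativisation still starts from the current node. Matching the \ATLssci outcome definition $\out(\fplay,\strat_\coal)$ (plays extending $\fplay$) against the \QCTLsi path semantics from the current node is the delicate accounting step, but it is handled in~\cite{DBLP:journals/iandc/LaroussinieM15} in the perfect-information case and the imperfect-information layer adds nothing new here since observations only constrain the \emph{labelling} of the move propositions, not the path quantification. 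A secondary point to be careful about is the direction of the inclusion between observation sets versus ``observes more''; fixing the \CKS encoding so that a finer observational equivalence corresponds to a larger index set (e.g. by taking observation sets to be downward-closed in the $\obsless$ order, or by an explicit renaming of components) makes the hierarchy translation literal and is the cleanest way to avoid an off-by-a-complement error.
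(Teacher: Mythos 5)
Your proposal follows essentially the same route as the paper: encode the \CGS as a \CKS with one local-state component per agent (plus one for the position), adapt the translation of~\cite{DBLP:journals/iandc/LaroussinieM15} by quantifying the move propositions with observation-parameterised quantifiers so that uniformity of the labelling coincides with uniformity of strategies under synchronous perfect recall, and make the observation sets cumulative (downward-closed under $\obsless$) so that ``$a$ observes no more than $b$'' literally becomes inclusion of observation sets --- this last point is exactly the paper's redefinition $\obs'_{i}=\{j\mid \ag_{j}\obsless\ag_{i}\}$, justified there by a small lemma showing it does not change the induced indistinguishability relation. Your verification that the hierarchy condition on the instance then yields a hierarchical \QCTLsi formula for the quantifiers over move propositions is also the paper's argument.

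There is, however, one genuine gap. The translation of~\cite{DBLP:journals/iandc/LaroussinieM15} does not relativise the temporal operators directly: at \emph{each} strategy quantifier it existentially quantifies a fresh outcome-marking proposition $\pout$, necessarily with \emph{perfect} information, and then guards the continuation with $\A(\G\,\pout\impl\ldots)$. Under Definition~\ref{def-hier-formulas} the hierarchy condition requires the observation of an \emph{outer} quantifier to be included in that of every quantifier nested \emph{inside} it; a full-observation quantifier is therefore harmless only if it is innermost, and your parenthetical claim that a full-observation quantifier placed ``at the very outside'' is ``$\subseteq$ everything'' has the inclusion backwards --- the full observation set contains every $\obs'_{i}$, it is not contained in them. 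Since the $\pout$ quantifiers occur at every nesting level, any instance with nested strategic quantifiers would produce a non-hierarchical formula, and Theorem~\ref{lab-theo-QCTLi} would not apply. The paper resolves this by eliminating $\pout$ altogether and replacing it with an \emph{unquantified} formula $\psiout{\coal}$ that characterises, purely in terms of the position propositions and the already-quantified move propositions, the paths consistent with the currently bound strategies, guarding the continuation with $\A(\psiout{\coal\union\coala}\impl\ldots)$. Your phrase about writing the transition function as a definable relation over the move propositions gestures in this direction, but as written the proposal neither identifies the obstacle nor commits to the quantifier-free fix, and the explicit (incorrect) dismissal of the full-observation quantifier is precisely the ``off-by-a-complement error'' you warn against at the end.
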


We build upon the proof in
\cite{DBLP:journals/iandc/LaroussinieM15} that
establishes the decidability of the model-checking problem for \ATLssc
by reduction to the model-checking problem for \QCTLs. The main difference is that
we reduce to the model-checking problem for \QCTLsi instead, using
quantifiers parameterised with observations corresponding to agents'
observations. 
We also need to make a couple of adjustments to obtain formulas in the decidable
fragment \QCTLsih.  

Let $(\Phi,(\CGS,\pos_{\init}))$ be a hierarchical instance of the \ATLssci
model-checking problem, where
$\CGS=(\setpos,\trans,\val,\{\obseq\}_{\ag\in\Ag})$ is a \CGS over
$\APf$.  In the reduction we will  transform  $\Phi$ into an
equivalent \QCTLsi formula $\Phi'$ in which we need to
refer to the current position in the model  $\CGS$, and also to talk about moves taken
by agents. To do so, we consider the additional sets of atomic
propositions $\APv\egdef\{p_{\pos}\mid\pos\in\setpos\}$ and
$\APm\egdef\{p^{\ag}_{\mov}\mid\ag\in\Ag \mbox{ and }\mov\in\setmoves\}$,
that we take disjoint from $\APf$.

First we  
define the \CKS $\CKS_{\CGS}$ on which  $\Phi'$
will be evaluated. Since the models of the two logics use different
ways to represent
imperfect information (equivalence relations on positions for \CGS and
local states for \CKS) this requires a bit of work. First, for each $\pos\in\setpos$ and $\ag\in\Ag$,  let us
define $\eqc{\ag}$ as the equivalence class of $\pos$ for relation
$\obseq$. Now,  noting
$\Ag=\{\ag_{1},\ldots,\ag_{n}\}$, we define for each $i\in [n]$ the set
$\setlstates_{i}\egdef\{\eqc{\ag_{i}}\mid\pos\in\setpos\}$ of local
states for agent~$\ag_{i}$.  Since
we need to know the actual position of the \CGS to define the dynamics,  we
also let $\setlstates_{n+1}\egdef\setpos$. States of
$\CKS_{\CGS}$ will thus be tuples in $\setlstates_{1}\times\ldots\times\setlstates_{n}\times\setlstates_{n+1}$.
For each $\pos\in\CGS$, let
$\state_{\pos}\egdef(\eqc{\ag_{1}},\ldots,\eqc{\ag_{n}},\pos)$ be its
corresponding state in $\CKS_{\CGS}$.

We can now define $\CKS_{\CGS}\egdef(\setstates,\relation,\lab')$, where
\begin{itemize}
\item $\setstates\egdef\{\state_{\pos} \mid \pos\in\setpos\}$,
\item $\relation\egdef\{(\state_{\pos},\state_{\pos'})\mid
  \exists\jmov\in\Mov^{\Ag} \mbox{ s.t. }\trans(\pos,\jmov)=\pos'\}$, and
\item $\lab'(\state_{\pos})\egdef\val(\pos)\union \{p_{\pos}\}$.
\end{itemize}

To make the connection between finite plays in $\CGS$ and nodes in
tree unfoldings of $\CKS_{\CGS}$, let us define, for every finite play
$\fplay=\pos_{0}\ldots\pos_{k}$, the node $\noeud_{\fplay}\egdef
\state_{\pos_{0}}\ldots \state_{\pos_{k}}$ in
$\unfold[\CKS_{\CGS}]{\state_{\pos_{0}}}$ (which exists, by definition of
$\CKS_{\CGS}$ and of tree unfoldings). Observe that the mapping
$\fplay\mapsto\noeud_{\fplay}$ is in fact a bijection between the set
of finite plays starting in a given position $\pos$ and the set of
nodes in $\unfold[\CKS_{\CGS}]{\state_{\pos}}$.

Now it should be clear that giving to a propositional quantifier in
\QCTLsi observation $\obs_{i}\egdef \{i\}$, for $i\in[n]$, amounts to
giving him the same observation as agent~$\ag_{i}$. Formally, one can
 prove the following lemma, simply by applying the definitions
of observational equivalence in the two frameworks:
\begin{lemma}
  \label{lem-obs}
  For all finite plays $\fplay,\fplay'$ starting in 
  $\pos$, $\fplay\obseq[\ag_{i}]\fplay'$ iff
  $\noeud_{\fplay}\oequivt[\obs_{i}]\noeud_{\fplay'}$ in $\unfold[\CKS_{\CGS}]{\state_{\pos}}$.
\end{lemma}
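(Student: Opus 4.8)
The plan is to unwind the two notions of observational indistinguishability side by side---the relation $\obseq[\ag_{i}]$ on finite plays of $\CGS$ and the relation $\oequivt[\obs_{i}]$ on nodes of $\unfold[\CKS_{\CGS}]{\state_{\pos}}$---and to observe that they are tied together by a single elementary feature of the construction of $\CKS_{\CGS}$: the $i$-th local-state component of $\state_{\pos}$ is, by its very definition, the $\obseq[\ag_{i}]$-equivalence class of $\pos$. The whole statement is then pure bookkeeping, so I do not expect any genuine obstacle; the only two points that need care are the matching of the length conditions and the fact that the singleton observation $\obs_{i}=\{i\}$ isolates exactly the component that carries agent $\ag_{i}$'s information.

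First I would write $\fplay=\pos_{0}\ldots\pos_{k}$ and $\fplay'=\pos'_{0}\ldots\pos'_{k'}$, so that by definition of the mapping $\fplay\mapsto\noeud_{\fplay}$ we have $\noeud_{\fplay}=\state_{\pos_{0}}\ldots\state_{\pos_{k}}$ and $\noeud_{\fplay'}=\state_{\pos'_{0}}\ldots\state_{\pos'_{k'}}$. By the definition of synchronous perfect recall, $\fplay\obseq[\ag_{i}]\fplay'$ holds exactly when $k=k'$ and $\pos_{j}\obseq[\ag_{i}]\pos'_{j}$ for every $j\in\{0,\ldots,k\}$. On the other side, by the definition of $\oequivt[\obs_{i}]$ the relation $\noeud_{\fplay}\oequivt[\obs_{i}]\noeud_{\fplay'}$ holds exactly when $k=k'$ and, for every such $j$, the states $\state_{\pos_{j}}$ and $\state_{\pos'_{j}}$ are $\obs_{i}$-indistinguishable; since $\obs_{i}=\{i\}$, this state-level condition amounts to equality of the $i$-th local-state components of $\state_{\pos_{j}}$ and $\state_{\pos'_{j}}$.

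The key step is then the identity that makes both sides coincide position by position. By the definition of $\state_{\pos_{j}}$, its $i$-th component is the equivalence class of $\pos_{j}$ for $\obseq[\ag_{i}]$, and two positions lie in the same class if and only if they are $\obseq[\ag_{i}]$-equivalent; hence the $i$-th components of $\state_{\pos_{j}}$ and $\state_{\pos'_{j}}$ are equal iff $\pos_{j}\obseq[\ag_{i}]\pos'_{j}$. Combining this position-wise equivalence with the common length requirement $k=k'$ present on both sides, I would conclude that $\fplay\obseq[\ag_{i}]\fplay'$ iff $\noeud_{\fplay}\oequivt[\obs_{i}]\noeud_{\fplay'}$, which is the claim.
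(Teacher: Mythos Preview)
Your argument is correct and is precisely what the paper has in mind: it does not spell out a proof but merely remarks that the lemma follows ``simply by applying the definitions of observational equivalence in the two frameworks,'' which is exactly the positionwise unfolding you carry out. The one point worth double-checking---that $\obs_{i}=\{i\}$ singles out the $i$-th component and that this component is $\eqc[\pos_{j}]{\ag_{i}}$---you handle correctly.
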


We now describe the translation\footnote{Here we abuse language: the
  construction depends on the model $\CGS$ and is therefore not a
  translation in the usual sense.} from \ATLsci formulas to \QCTLsi
formulas. First we recall the translation from~\cite{DBLP:journals/iandc/LaroussinieM15} for the perfect-information
case. 

The translation  from \ATLsc to \QCTLs is parameterised by a coalition
$\coala\subset\Ag$, that conveys the set of agents who are currently
bound to a strategy. It is defined by induction on $\Phi$ as follows:
\begin{align*}
  \tr{p}&\egdef p & \tr{\neg \phi}&\egdef \neg \tr{\phi}\\
  \tr{\phi\ou\phi'} &\egdef \tr{\phi}\ou\tr{\phi'} &
  \tr{\relstrat\phi}&\egdef \tr[\coala\setminus\coal]{\phi} \\
  \tr{\X\phi}&\egdef \X\tr{\phi} & \tr{\phi\until\phi'}&\egdef \tr{\phi}\until\tr{\phi'}
\end{align*}
The only non-trivial case is for formulas of the form
$\Estratsc\phi$. For the rest of the section, we let 
$\Mov=\{\mov_{1},\ldots,\mov_{\maxmov}\}$. Now, if $\coal=\{\ag_{i_{1}},\ldots,\ag_{i_{k}}\}$,
we define
\begin{align*}
\tr{\Estratsc\phi}\egdef & \;\exists \mov_{1}^{\ag_{i_{1}}} \ldots
\mov_{\maxmov}^{\ag_{i_{1}}}\ldots \mov_{1}^{\ag_{i_{k}}} \ldots
\mov_{\maxmov}^{\ag_{i_{k}}} \pout.\\
& \left ( \phistrat{\coal} \et \phiout{\coal\union\coala} \et \A (\G
  \pout \impl \tr[\coal\union\coala]{\phi})  \right ),
\end{align*}
where
\[
\phistrat{\coal}\egdef\biget_{\ag\in\coal}\A\G\bigou_{\mov\in\Mov}(\mov^{\ag}\et\biget_{\mov'\neq\mov}\neg\mov'^{\ag})
\]
and
\begin{align*}
 \phiout{\coal} \egdef & \; \pout \et \A\G \left[\neg \pout \impl \A\X\neg
  \pout \right] \et \\ & \A\G\left[\pout \impl \vphantom{\bigou_{\pos\in\setpos}   
   \bigou_{\jmov\in\Mov^{\coal}} \left(p_{\pos} \et p_{\jmov} \et \A\X \left(
       \bigou_{\pos'\in AA}p_{\pos'} \leftrightarrow \pout \right) \right) } \right.   \left. \bigou_{\pos\in\setpos}   
   \bigou_{\jmov\in\Mov^{\coal}} \left(p_{\pos} \et p_{\jmov} \et \A\X \left(
       \bigou_{\pos'\in \trans(\pos,\jmov)}p_{\pos'} \leftrightarrow \pout \right) \right) \right].
\end{align*}

In $\phiout{\coal}$, for $\jmov=(\mov_{\ag})_{\ag\in\coal}\in\Mov^{\coal}$,
notation $p_{\jmov}$ stands for
the propositional formula $\biget_{\ag\in\coal}\mov_{\ag}^{\ag}$ which characterizes the joint
move $\jmov$ that agents in $\coal$ play in $\pos$.
Also, $\trans(\pos,\jmov)$ is
the set of possible next positions when the current one is $\pos$ and
agents in $\coal$ play $\jmov$, and it is defined as
$\trans(\pos,\jmov)\egdef\{\trans(\pos,(\jmov,\jmov'))\mid
\jmov'\in\Mov^{\Ag\setminus\coal}\}$.

The idea of this translation is the following: first, for each agent
 $\ag\in\coal$ and each possible move $\mov\in\Mov$, an existential
 quantification on the  atomic proposition $\mov^{\ag}$ ``chooses''
 for each finite play $\fplay$ of $(\CGS,\pos_{\init})$ (or, equivalently,
 for each node $\noeud_{\fplay}$ of the $\unfold[\CKS_{\CGS}]{\state_{\pos_{\init}}}$)
 whether agent $\ag$
 plays  move $\mov$ in $\fplay$ or not. Formula $\phistrat{\coal}$ ensures that each
 agent $\ag$
 chooses exactly one move in each finite play, and thus that
 atomic propositions $\mov^{\ag}$ characterise a strategy for her.
 An  atomic proposition $\pout$ is then used
 to mark the paths that follow the 
 currently fixed strategies: formula $\phiout{\coal\union\coala}$
 states that $\pout$ marks exactly the outcome of
 strategies just chosen for agents
 in $\coal$, as well as those of agents in $\coala$, that were
 chosen previously by a strategy quantifier ``higher'' in $\Phi$. 
 
 Note that we simplified slightly $\phistrat{\coal}$ and
$\phiout{\coal}$, using the fact that unlike
in~\cite{DBLP:journals/iandc/LaroussinieM15}, we have assumed in our
definition of \CGSs that the set of available
moves is the same for all agents in all positions (see
Footnote~\ref{footnote-models}). 

It is proven in~\cite{DBLP:journals/iandc/LaroussinieM15} that this
translation is correct, in the sense that for every \ATLsc
closed formula $\phi$ and pointed
perfect-information concurrent game structure $(\CGS,\pos)$,
letting $\KS_{\CGS}$ be as described above but removing the local
states for all agents and keeping only the $\setlstates_{n+1}$ component,  we have:
\[\CGS,\pos\models\phi \mbox{ iff }
\unfold[\CKS_{\CGS}]{\state_{\pos}}\modelst \tr[\emptyset]{\phi}.\]
We now explain how to adapt this translation to the case of imperfect information.
Observe that the only difference between \ATLssc and \ATLssci is that in
the latter, strategies  must be defined uniformly
over indistinguishable finite plays, \ie,  a strategy $\strat$ for an agent
$\ag$ must be such that if $\fplay\obseq\fplay'$, then $\strat(\fplay)=\strat(\fplay')$.
To enforce that the strategies coded by atomic propositions
$\mov^{\ag}$ in $\tr{\Estratsc \phi}$ are uniform, we use the
propositional quantifiers with partial observation of
\QCTLsi. Formally, we define a translation $\tri{~~~~}$ from \ATLssci to
\QCTLsi. It is defined exactly as the one from \ATLssc to \QCTLs,
except for the following inductive case.

If $\coal=\{\ag_{i_{1}},\ldots,\ag_{i_{k}}\}$ we let
\begin{align*}
\tri{\Estratsc\phi}\egdef & \;\exists^{\obs_{i_{1}}} \mov_{1}^{\ag_{i_{1}}} \ldots
\mov_{\maxmov}^{\ag_{i_{1}}}\ldots \exists^{\obs_{i_{k}}}\mov_{1}^{\ag_{i_{k}}} \ldots
\mov_{\maxmov}^{\ag_{i_{k}}} \exists\pout.\\
& \left ( \phistrat{\coal} \et \phiout{\coal\union\coala} \et \A (\G
  \pout \impl \tri[\coal\union\coala]{\phi})  \right ),
\end{align*}
where $\phistrat{\coal}$ and $\phiout{\coal}$ are defined as before,
and $\exists \pout$ is a macro for $\exists^{\{1,\ldots,n+1\}}\pout$
(see Remark~\ref{rem-perf-quant}).

So the only difference from the previous translation is that now, the
labelling of each atomic proposition $\mov^{\ag_{i}}$ must be
$\obs_{i}$-uniform. 
This means that if two nodes $\noeud$ and
$\noeud'$ in $\unfold[\CKS_{\CGS}]{\state_{\pos_{\init}}}$ are
$\obs_{i}$-indistinguishable, then $\noeud$ is labelled with
$\mov^{\ag_{i}}$ if and only if $\noeud'$ also is. In other words,
 in the strategy coded by atomic propositions $\mov^{\ag_{i}}$, agent $\ag_{i}$ plays $\mov$ in $\noeud$ if and only
if she also plays it in $\noeud'$, and thus this strategy 
 is uniform (recall that, by Lemma~\ref{lem-obs}, observation $\obs_{i}$
correctly reflects agent $\ag_{i}$'s observation in
$\unfold[\CKS_{\CGS}]{\state_{\pos_{\init}}}$). It is then clear
 that this translation is correct:
 \begin{equation}
   \label{eq-1}
\CGS,\pos_{\init}\models\Phi \mbox{ iff
}\unfold[\CKS_{\CGS}]{\state_{\pos_{\init}}}\modelst \tri[\emptyset]{\Phi}.   
 \end{equation}

However, even if we have taken $(\Phi,(\CGS,\pos_{\init}))$ to be a
hierarchical instance, $\tri[\emptyset]{\Phi}$ is not in the decidable
fragment \QCTLsih.
Indeed, with the current definition
of observations $\{\obs_{i}\}_{i\in[n]}$,
hierarchical observation in $\CGS$ does
not imply hierarchical observation in $\CKS_{\CGS}$: since $\obs_{i}=\{i\}$, for
$i\neq j$ it is never the case that $\obs_{i}\subseteq
\obs_{j}$. Still, we note that if agent $\ag_{j}$ observes no more
than agent $\ag_{i}$, then letting $\ag_{i}$ see also what agent
$\ag_{j}$ sees does not increase her knowledge of the
situation:

\begin{lemma}
  \label{lem-more}
  If $\ag_{j}\obsless\ag_{i}$, then for all finite plays
  $\fplay,\fplay'$ that start in the same position,
  $\noeud_{\fplay}\oequivt[\obs_{i}]\noeud_{\fplay'}$ iff   $\noeud_{\fplay}\oequivt[\obs_{i}\union\obs_{j}]\noeud_{\fplay'}$.
\end{lemma}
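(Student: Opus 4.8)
The plan is to prove Lemma~\ref{lem-more} directly from the definitions of $\oequivt[\obs]$ and of the correspondence $\fplay\mapsto\noeud_\fplay$, exploiting the hypothesis $\ag_j\obsless\ag_i$.

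First I would unfold the definition of node-indistinguishability. Write $\fplay=\pos_0\cdots\pos_k$ and $\fplay'=\pos'_0\cdots\pos'_k$ (they have equal length, since otherwise neither side of the claimed equivalence holds and the statement is vacuous), so that $\noeud_\fplay=\state_{\pos_0}\cdots\state_{\pos_k}$ and $\noeud_{\fplay'}=\state_{\pos'_0}\cdots\state_{\pos'_k}$. By definition, $\noeud_\fplay\oequivt[\obs]\noeud_{\fplay'}$ iff $\state_{\pos_m}\oequiv[\obs]\state_{\pos'_m}$ for every $m\in\{0,\ldots,k\}$. Recall that $\state_{\pos}=(\eqc{\ag_1},\ldots,\eqc{\ag_n},\pos)$, that $\obs_i=\{i\}$, and that $\oequiv[\obs_i]$ compares exactly the $i$-th local-state component; hence $\state_{\pos_m}\oequiv[\obs_i]\state_{\pos'_m}$ iff $\eqc[\pos_m]{\ag_i}=\eqc[\pos'_m]{\ag_i}$ iff $\pos_m\obseq[\ag_i]\pos'_m$. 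Likewise $\state_{\pos_m}\oequiv[\obs_i\union\obs_j]\state_{\pos'_m}$ iff $\pos_m\obseq[\ag_i]\pos'_m$ \emph{and} $\pos_m\obseq[\ag_j]\pos'_m$.

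The key step is the observation that, under the hypothesis $\ag_j\obsless\ag_i$, the conjunction collapses to its first conjunct. By Definition~\ref{def-hierarchical}, $\ag_j\obsless\ag_i$ means that for all positions $\pos,\pos'$, $\pos\obseq[\ag_i]\pos'$ implies $\pos\obseq[\ag_j]\pos'$. Therefore, for each $m$, $\pos_m\obseq[\ag_i]\pos'_m$ already entails $\pos_m\obseq[\ag_j]\pos'_m$, so $\pos_m\obseq[\ag_i]\pos'_m$ is equivalent to $\pos_m\obseq[\ag_i]\pos'_m \text{ and } \pos_m\obseq[\ag_j]\pos'_m$. Quantifying over all $m\in\{0,\ldots,k\}$ and using the component-wise characterisations above, we get that $\noeud_\fplay\oequivt[\obs_i]\noeud_{\fplay'}$ iff $\noeud_\fplay\oequivt[\obs_i\union\obs_j]\noeud_{\fplay'}$, which is exactly the statement. (One can alternatively route the position-level equivalences through Lemma~\ref{lem-obs}, but it is cleaner to argue straight from the definitions since the local-state components are literally the observation-classes.)

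There is essentially no obstacle here: the only thing to be slightly careful about is the degenerate case where $\fplay$ and $\fplay'$ have different lengths or start in different positions, which is excluded by the ``start in the same position'' hypothesis together with the fact that $\oequivt[\obs]$ requires equal length — in every such degenerate case both sides of the biconditional are false, so the equivalence holds trivially. The substantive content is entirely the implication supplied by $\ag_j\obsless\ag_i$, applied componentwise along the two plays.
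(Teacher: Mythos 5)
Your proof is correct and follows essentially the same route as the paper: both reduce the node-level equivalence to the componentwise state-level claim $\state_{\pos}\oequiv[\obs_{i}]\state_{\pos'}$ iff $\state_{\pos}\oequiv[\obs_{i}\union\obs_{j}]\state_{\pos'}$, and derive that claim from the definition of $\obsless$ via the observation-class components of $\state_{\pos}$. Your explicit handling of the unequal-length case is a minor addition the paper leaves implicit.
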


\begin{proof}
  Assume that $\ag_{j}\obsless\ag_{i}$.  It is enough to see that for
  every pair of states $\state_{\pos},\state_{\pos'}$ in
  $\CKS_{\CGS}$, we have $\state_{\pos}\oequiv[\obs_{i}]\state_{\pos'}$
  iff
  $\state_{\pos}\oequiv[\obs_{i}\union\obs_{j}]\state_{\pos'}$. The
  right-to-left implication is obvious: if two states have the same
  $i$-th and $j$-th components, in particular they have the same
  $i$-th component. For the other direction, assume that
  $\state_{\pos}\oequiv[\obs_{i}]\state_{\pos'}$. This means that
  $\eqc{\ag_{i}}=\eqc[\pos']{\ag_{i}}$, and thus that
  $\pos\obseq[\ag_{i}]\pos'$. Since $\ag_{j}\obsless\ag_{i}$, we also
  have that   $\pos\obseq[\ag_{j}]\pos'$, and thus that
  $\eqc{\ag_{j}}=\eqc[\pos']{\ag_{j}}$, and it follows that   $\state_{\pos}\oequiv[\obs_{i}\union\obs_{j}]\state_{\pos'}$.
\end{proof}

In the light of this Lemma~\ref{lem-more}, we can safely redefine
observations as follows: for each $i\in [n]$, we let
\[\obs'_{i}\egdef\bigunion_{j\mid \ag_{j}\obsless\ag_{i}}\obs_{j}.\] 
Observe that in fact $\obs'_{i}=\{j\mid \ag_{j}\obsless\ag_{i}\}$.
Informally, a quantifier with observation $\obs'_{i}$ sees what agent
$\ag_{i}$ observes (note that $\obsless$ is reflexive), as well as
what agents that see no more than
$\ag_{i}$ observe.

Let us define a new version of the translation $\tri{~~~~}$.
First, $\Phi$ being hierarchical in $\CGS$, for each subformula of $\Phi$ of
the form $\Estratsc\phi$ we have that  $\coal$ is hierarchical in $\CGS$.
It is thus possible to choose for agents in $\coal$ an indexing
$\coal=\{\ag_{i_{1}},\ldots,\ag_{i_{k}}\}$ 
such that for all $1\leq c < d\leq k$, we have
$\ag_{i_{c}}\obsless\ag_{i_{d}}$.

Now the translation remains the same as before except for the following inductive case:

If $\coal=\{\ag_{i_{1}},\ldots,\ag_{i_{k}}\}$,
where  for all $1\leq c < d\leq k$, we have
$\ag_{i_{c}}\obsless\ag_{i_{d}}$,
we let
\begin{align*}
\tri{\Estratsc\phi}\egdef &\; \exists^{\obs'_{i_{1}}} \mov_{1}^{\ag_{i_{1}}} \ldots
\mov_{\maxmov}^{\ag_{i_{1}}}\ldots \exists^{\obs'_{i_{k}}}\mov_{1}^{\ag_{i_{k}}} \ldots
\mov_{\maxmov}^{\ag_{i_{k}}} \exists\pout.\\
& \left ( \phistrat{\coal} \et \phiout{\coal\union\coala} \et \A (\G
  \pout \impl \tri[\coal\union\coala]{\phi})  \right ),
\end{align*}
where $\phistrat{\coal}$ and $\phiout{\coal}$ are defined as before.

From Lemma~\ref{lem-more} we have that this new translation is still
correct in the sense of Equation~(\ref{eq-1}). In addition, for all
$1\leq c < d \leq k$ we have
$\obs'_{i_{c}}\subseteq \obs'_{i_{d}}$.

Now consider formula $\tri[\emptyset]{\Phi}$. Because
$\Phi$ is hierarchical in $\CGS$, 
for every pair of
  subformulas $\phi_{1},\phi_{2}$ of the form
  $\phi_{1}=\Estratsc[A_{1}]\phi'_{1}$ and
  $\phi_{2}=\Estratsc[A_{2}]\phi'_{2}$ where  
  $\phi_{2}$
  is a subformula of $\phi'_{1}$,  maximal agents of $A_{1}$  observe no
  more than minimal agents of $A_{2}$.
  It is then easy to see that $\tri[\emptyset]{\Phi}$ would be
  hierarchical if there were not the perfect-information
  quantifications on atomic proposition $\pout$ that break the
  monotony of observations along subformulas when there are nested strategic quantifiers.
  We explain how to remedy this last problem.

We remove altogether  proposition $\pout$, and we use
instead the formula $\psiout{\coal}$ defined below  to characterise
which paths are in the outcome of the currently-fixed strategies:
\[\psiout{\coal}\egdef \G\left(
  \biget_{\pos\in\setpos}\biget_{\jmov\in\Mov^{\coal}} p_{\pos} \et
  p_{\jmov}\impl \X \bigou_{\pos'\in\trans(\pos,\jmov)} p_{\pos'}
\right).\]

Clearly, this formula holds in a path $\tpath$ of
$\unfold[\CKS_{\CGS}]{\state_{\pos_{\init}}}$ marked with propositions
$\mov^{\ag}$ characterising strategies for agents in $\coal$, if at
each point along $\tpath$ corresponding to some position $\pos$, the
next point in $\tpath$ corresponds to a position $\pos'$ that can be attained
from $\pos$ when agents in $\coal$ each play the move prescribed by
their current strategy.
The last modification to $\tri{~~~~}$ is thus the following:

If $\coal=\{\ag_{i_{1}},\ldots,\ag_{i_{k}}\}$,
where  for all $1\leq c < d\leq k$, we have $\ag_{i_{c}}\obsless\ag_{i_{d}}$,
we let
\begin{align*}
\tri{\Estratsc\phi}\egdef & \;\exists^{\obs'_{i_{1}}} \mov_{1}^{\ag_{i_{1}}} \ldots
\mov_{\maxmov}^{\ag_{i_{1}}}\ldots \exists^{\obs'_{i_{k}}}\mov_{1}^{\ag_{i_{k}}} \ldots
\mov_{\maxmov}^{\ag_{i_{k}}}.  \phistrat{\coal} \et \A \left(\psiout{\coal\union\coala} \impl \tri[\coal\union\coala]{\phi}\right),
\end{align*}
where $\phistrat{\coal}$ is defined as before.

It follows from the above considerations that this translation is
still correct in the sense of Equation~(\ref{eq-1}), and one can check
that $\tri[\emptyset]{\Phi}$ is a hierarchical \QCTLsi formula.
We conclude the proof by recalling that by
Theorem~\ref{lab-theo-QCTLi}, model checking \QCTLsih is decidable.

Concerning complexity, model checking \ATLsc being already
nonelementary~\cite{DBLP:journals/iandc/LaroussinieM15}, so is it for
\ATLsci. 
  
 


\section{Conclusion}
\label{sec-conclusion}

In this work we established new decidability results for the
model-checking problem of \ATLs with imperfect information and perfect
recall as well as its extension with strategy context. Should new
decidable classes of multiplayer games with imperfect information  be
discovered, and assuming the reasonable property of closure under
initial shifting, our transfer theorem
(Theorem~\ref{theo-meta}) would entail new 
decidability results also for \ATLsi. As for \ATLssci, it would be
interesting to investigate whether a meaningful notion of hierarchical instances
based on, \eg, dynamic or recurring hierarchical information instead of
hierarchical observation as  here, could  lead to
stronger decidability results.  
%



\end{document}

%
